\definecolor{light-gray}{gray}{0.7}
\numberwithin{equation}{section}
\newcommand{\Ub}[1]{U^{(#1)}}
\patchcmd{\@maketitle}{\LARGE \@title}{\fontsize{15}{19.2}\selectfont\@title}{}{}
\title{Low Rank Approximation in Simulations of Quantum Algorithms}
\author{Linjian Ma\\
Department of Computer Science\\
University of Illinois at Urbana-Champaign\\
lma16@illinois.edu
\and Chao Yang\\
Computational Research Division\\
Lawrence Berkeley National Laboratory\\
cyang@lbl.gov}
\date{}
\newtheorem{thm}{Theorem}[section]
\newcommand{\C}{\mathbb{C}}
\newcommand{\tsr}[1]{\mathcal{#1}}
\newcommand{\vcr}[1]{#1}
\newcommand{\aij}[2]{a_{#1}^{(#2)}}
\newcommand{\bij}[2]{b_{#1}^{(#2)}}
\newcommand{\mat}[1]{#1}
\newcommand{\inti}[2]{\{{#1},\ldots, {#2}\}}
\newcommand{\M}{M}
\newcommand{\bigast}{\mathop{\scalebox{2.}{\raisebox{-0.2ex}{$\ast$}}}}%
\newcommand{\bigo}{\mathcal{O}}
\newcommand{\state}[1]{| #1 \rangle}
\newcommand{\stateconj}[1]{\langle #1 |}
\begin{document}
\maketitle

\begin{abstract}
Simulating quantum algorithms on classical computers is challenging when the system size, i.e., the number of qubits used in the quantum algorithm, is moderately large. 
However, some quantum algorithms and the corresponding quantum circuits can be simulated efficiently on a classical computer if the input quantum state is a low rank tensor and all intermediate states of the quantum algorithm can be represented or approximated by low rank tensors. 
In this paper, we examine the possibility of simulating a few quantum algorithms by using   low-rank canonical polyadic (CP) decomposition to represent the input and all intermediate states of  these algorithms. Two rank reduction algorithms are used to enable efficient simulation.
We show that some of the algorithms preserve the low rank structure of the input state and can thus be efficiently simulated on a classical computer. However, the rank of the intermediate states in other quantum algorithms can increase rapidly, making efficient simulation more difficult.  To some extent, such difficulty reflects the advantage or superiority of a quantum computer over a classical computer. As a result, 
understanding the low rank structure of a quantum algorithm allows us to identify algorithms that can benefit significantly from quantum computers.
\end{abstract}

\section{Introduction}
A quantum algorithm is often expressed by a unitary transformation $U$ applied to a quantum state $\state{\psi}$. On a quantum computer, $\state{\psi}$ can be efficiently encoded by $n$ qubits, effectively representing $2^n$ amplitudes simultaneously, and $U$ is implemented as a sequence of one or two-qubit gates that are themselves $2\times 2$ or $4\times 4$ unitary transformations.  

To simulate a quantum algorithm on a classical computer, we can simply represent $\state{\psi}$ as a vector in $\C^{2^n}$, and $U$ as a $\C^{2^n \times 2^n}$ matrix, and perform a matrix-vector multiplication $U\state{\psi}$. However, for even a moderately large $n$, e.g., $n=50$, the amount of memory required to store $U$ and $\state{\psi}$ explicitly far exceeds what is available on many of today's powerful supercomputers, thereby making the simulation infeasible~\cite{pednault2017breaking,chen201864,pednault2019leveraging,wu2019full}.
Fortunately, for many quantum algorithms, both $\state{\psi}$ and $U$ have structures.  In particular, $\state{\psi}$ may have a low-rank tensor structure, and the quantum circuit representation of $U$ gives a decomposition of $U$ that can be written as
\begin{equation}
    U = U^{(1)} U^{(2)} \cdots U^{(D)},
\label{eq:ucircuit}
\end{equation}
where $U^{(i)}$ is a linear combination of Kronecker products of $2\times 2$ matrices, many of which are identities, and $D$ is the depth of the circuit which is typically bounded by a (low-degree) polynomial of $n$.  As a result, if the low rank structure of $\state{\psi}$ can be preserved in the successive multiplication of $U^{(i)}$'s with the input, we may be able to simulate the quantum algorithm efficiently  for a relatively large $n$.

When $\state{\psi}$ is viewed as an $n$-dimensional tensor, there are several ways to represent it efficiently. One of them is known as a canonical polyadic (CP) decomposition~\cite{hitchcock1927expression,harshman1970foundations} written as
\begin{equation}
    \state{\psi} = \sum_{i_1,\ldots,i_n \in \{0,1\}}\sum_{k=1}^R 
    A^{(1)}_{i_1k} A^{(2)}_{i_2k} \cdots A^{(n)}_{i_nk} \state{i_1i_2\ldots i_n},
    \label{eq:cp}
\end{equation}
where $A^{(i)} \in \mathbb{C}^{2\times R}$ and $R$ is known as the rank of the CP decomposition.
The second representation is known as matrix product state (MPS)~\cite{schollwock2011density} in the physics literature or tensor train (TT)~\cite{oseledets2011tensor} in the numerical linear algebra literature, which is a special tensor networks representation of a high dimensional tensor ~\cite{markov2008simulating}. In this representation, 
the quantum state can be written as 
\begin{equation}
    \state{\psi} = \sum_{i_1,\ldots,i_n \in \{0,1\}}\sum_{k_1, \ldots, k_{n-1}} 
    \tsr{A}^{(1)}_{i_1k_0k_1} \tsr{A}^{(2)}_{i_2k_1k_2} \cdots \tsr{A}^{(n)}_{i_nk_{n-1}k_n} \state{i_1i_2\ldots i_n},
\end{equation}
where $\tsr{A}^{(j)}$ is a tensor of dimension $2 \times R_{j-1}\times R_{j}$, with $R_0 = R_{n}=1$.  The rank of an MPS is often defined to be the maximum of $R_j$ for $j\in \{1,2,...,n-1\}$.  
The memory requirements for CP and MPS representations of $\state{\psi}$ are $\bigo(Rn)$ and $\bigo(R^2 n)$, respectively. When $R$ is relatively small, such requirement is much less than the $\mathcal{O}(2^n)$ requirement for storing $\state{\psi}$ as an vector, which allows us to simulate a quantum algorithm with a relatively large $n$ on a classical computer that stores and manipulates $\state{\psi}$ in these compact forms.

For several quantum algorithms, the rank of the CP or MPS representation of the input $\state{\psi}$ is one or low.  However, when $U^{(i)}$'s are successively applied to $\state{\psi}$, the rank of the intermediate tensors (the tensor representation of the intermediate states) can start to increase. When the rank of an intermediate tensor becomes too high, we may not be able to continue the simulation for a large $n$.
One way to overcome this difficulty is to perform rank reductions on intermediate tensors when their ranks exceed a threshold.
When a CP decomposition is used to represent $\state{\psi}$, we can take, for example, \eqref{eq:cp} as the input and use the alternating least squares (ALS) algorithm to obtain an alternative CP decomposition that has a smaller $R$.  The rank reduction of an MPS can be achieved by performing a sequence of truncated singular value decomposition (SVD).

Performing rank reduction on intermediate tensors 
can introduce truncation error. For some quantum algorithms, this error is zero or small, thus not affecting the final outcome of the quantum algorithm. For other algorithms, the truncation error can accumulate and results in significant deviation of the computed result from the exact solution. Understanding whether a specific quantum algorithm can be accurately low-rank approximated is valuable for assessing the difficulty of simulating the algorithm on a classical computer and the superior power of quantum computers.

In this paper, we examine the use of low-rank approximation via CP decomposition to simulate several quantum algorithms. These algorithms include the quantum Fourier transform (QFT)~\cite{coppersmith2002approximate} and quantum phase estimation~\cite{cleve1998quantum}, which are the building blocks of other quantum algorithms, the Grover's search algorithm~\cite{grover1996fast, brassard2002quantum},
and quantum walk ~\cite{szegedy2004quantum,childs2003exponential} algorithms, which are quantum extensions of classical random walks on graphs. 
We choose to focus on using CP decomposition instead of MPS or general tensor networks to represent the input and intermediate tensors, because the (low) rank product structure of the input and intermediate tensors in the quantum algorithm are relatively easy to see and interpret in CP terms. Furthermore, some of the unitary operations such as swapping two qubits is relatively easy to implement in a CP decomposed tensor. The use of low rank MPS, PEPS and tensor networks in quantum circuit simulation can be found in~\cite{zhou2020limits,gray2020hyper, pang2020efficient,guo2019general,chamon2012virtual}.

This paper is organized as follows. In Section~\ref{sec:notation}, we introduce the notations for quantum states, gates and circuits that are used throughout the paper. Section~\ref{sec:simulation} provides the background of quantum algorithm simulations. 
We describe algorithms for constructing and updating low-rank CP decompositions of a tensor in~\ref{sec:lowrank}.   In Sections~\ref{sec:qft},\ref{sec:grover} and \ref{sec:qwalk} we examine the possibility of using low-rank approximations to simulate QFT and phase estimation, Grover's search algorithm, and quantum walks, respectively. In Section~\ref{sec:compare}, we compare the computational and the memory cost of simulating different quantum algorithms using CP decomposition. 
In Section~\ref{sec:exp}, we report some  experimental results that demonstrate the effectiveness of using low-rank approximation to simulate quantum algorithms.

\section{Notations for Quantum States, Gates and Circuits}
\label{sec:notation}

Our analysis makes use of tensor algebra in both element-wise equations and specialized notation for tensor operations~\cite{kolda2009tensor}.
For vectors, lowercase Roman letters are used, e.g., $\vcr{v}$. For matrices and quantum gates, uppercase Roman letters are used, e.g., $\mat{M}$. For tensors, calligraphic fonts are used, e.g., $\tsr{T}$. An order $n$ tensor corresponds to an $n$-dimensional array with dimensions $s_1\times \cdots \times s_n$. 
In the following discussions, we assume that $s_1=\cdots=s_n=2$.
Elements of tensors are denotes in subscripts, e.g., $\tsr{T}_{ijkl}$ for an order 4 tensor $\tsr{T}$.
For a matrix $A$, $a_i$ denotes the $i$th column of $A$. 
Matricization is the process of unfolding a tensor into a matrix. Given a tensor $\tsr{{T}}$ the mode-$i$ matricized version is denoted by $\mat{T}_{(i)}\in \mathbb{C}^{2\times 2^{n-1}}$, where all the modes except the $i$th mode are combined into the column. 
We use parenthesized superscripts to label different tensors.
The Hadamard product of two matrices $\mat{U}, \mat{V}$ is denoted by $\mat{W} = \mat{U} \ast \mat{V}$.
The outer product of $n$ vectors $\vcr{u}^{(1)}, \ldots , \vcr{u}^{(n)}$ is denoted by $\tsr{T} = \vcr{u}^{(1)} \circ \cdots \circ \vcr{u}^{(n)}$. 
The Kronecker product of matrices $\mat{A} \in \mathbb{C}^{m\times n}$ and $\mat{B} \in \mathbb{C}^{p\times q}$ is denoted by $\mat{C} = \mat{A} \otimes \mat{B}$ where $\mat{C} \in \mathbb{C}^{mp \times nq}$.
For matrices $\mat{A}\in \mathbb{C}^{m\times k}$ and $\mat{B}\in \mathbb{C}^{n\times k}$, their Khatri-Rao product results in a matrix of size $mn\times k$ defined by
$
   \mat{A}\odot \mat{B} = [\mat{a}_1\otimes \mat{b}_1,\ldots, \mat{a}_k\otimes \mat{b}_k].
$
We use $\mat{A}^{\dagger}$ and $\mat{A}^{+}$ to denote the conjugate and the pseudo-inverse of the matrix $\mat{A}$, respectively.

The quantum state $\state{\psi}$ with $n$ qubits is a unit vector in $\mathbb{C}^{2^n}$. It can be viewed as an order $n$ tensor $\tsr{T}^{(\psi)}\in \mathbb{C}^{2\times\cdots\times 2}$,
\begin{equation}
\state{\psi} = \sum_{i_1,\ldots,i_n \in \{0,1\}}\tsr{T}^{(\psi)}_{i_1i_2\ldots i_n}\state{i_1i_2\ldots i_n}.
\end{equation}
The Kronecker product of two quantum states $\state{\psi},\state{\phi}$ can be written as $\state{\psi} \otimes \state{\phi}$ or $\state{\psi} \state{\phi}$.
We use the quantum circuit diagram~\cite{deutsch1989quantum} to represent the unitary transformation on a $n$-qubit system. In the quantum circuit, the unitary transformation is decomposed into simpler unitaries according to \eqref{eq:ucircuit}. Each factor $U^{(i)}$ corresponds to one layer of the circuit, which consists of Kronecker products of $2\times 2$ or $4 \times 4$ unitary matrices known as one-qubit and two-qubit gates. Some commonly used one-qubit gates are: \begin{equation}
H :=  \frac{1}{\sqrt{2}}\begin{bmatrix}
   1 &
   1 \\
   1 &
   -1
   \end{bmatrix},
 \quad
X := \begin{bmatrix}
   0 &
   1 \\
   1 &
   0
   \end{bmatrix},
 \quad
Z := \begin{bmatrix}
   1 &
   0 \\
   0 &
   -1
   \end{bmatrix},
    \label{eq:gates}
\end{equation}
\begin{equation}
R_n := \begin{bmatrix}
   1 &
   0 \\
   0 &
   e^{-\frac{2\pi i}{2^n}}
   \end{bmatrix},
 \quad
R_y(\theta) := \begin{bmatrix}
   \cos(\theta) &
   -\sin(\theta) \\
   \sin(\theta) &
   \cos(\theta)
   \end{bmatrix}.
    \label{eq:gates_rotation}
\end{equation}
Graphically, applying $n$ $2\times 2$ operators $U^{(1)},U^{(2)},\ldots,U^{(n)}$ successively to a one-qubit state $\state{x}$ yields $\state{y} = U^{(n)} \cdots U^{(2)} U^{(1)} \state{x}$. This operation can be drawn as
\[\vcenter{
\Qcircuit @C=1em @R=1em {
\lstick{\ket{x}} & \gate{U^{(1)}} & \gate{U^{(2)}} & \qw & \cdots & & \gate{U^{(n)}} &\qw & \ \ket{y}.
}}
\]
The application of a $4\times 4$ operator $U \otimes I$ to 
two qubits $q_1$ and $q_2$, where $U$ denotes an arbitrary $2\times2$ unitary matrix,  can be drawn as 
\[\vcenter{
\Qcircuit @C=1em @R=1em {
\lstick{q_1} & \gate{U} & \qw  \\
\lstick{q_2} & \qw      & \qw \ \quad.
}}
\vspace{1.mm}
\]
A controlled gate controlled-$U$ is a $4\times 4$ operator whose expression is 
\begin{equation}
\begin{bmatrix}
   I &
   O \\
   O &
   U
   \end{bmatrix}
   = E_1 \otimes I + E_2 \otimes U
   , \quad \text{where}\quad E_1 = \begin{bmatrix}
   1 &
   0 \\
   0 &
   0
   \end{bmatrix},\quad E_2 = \begin{bmatrix}
   0 &
   0 \\
   0 &
   1
   \end{bmatrix}.
   \label{eq:CU}
\end{equation}
The control-on-zero gate is similar to the controlled gate and is expressed as
\begin{equation}
\begin{bmatrix}
   U &
   O \\
   O &
   I
   \end{bmatrix}
   = E_1 \otimes U + E_2 \otimes I.
\end{equation}
The generalized controlled gate controls the behavior of one qubit based on multiple qubits. For a 3-qubit system where $U$ operates on the third qubit, the controlled-controlled-U gate is expressed as 
\begin{equation}
   E_1 \otimes E_1 \otimes U + (I \otimes I - E_1 \otimes E_1) \otimes I = I \otimes I \otimes I + E_1 \otimes E_1 \otimes (U-I).
\end{equation}
The diagrammatic representations for these three gates are shown respectively as follows,
\begin{equation*} \vcenter{
\Qcircuit @C=1em @R=1.4em {
   \lstick{q_1} & \ctrl{2}  &  \qw   & & & \ctrlo{2} &  \qw & & & \ctrl{1} &  \qw\\
   \lstick{q_2} &  \qw      &  \qw   & & & \qw       &  \qw & & & \ctrl{1}    &  \qw\\
   \lstick{q_3} & \gate{U}  &  \qw   & & & \gate{U}  & \qw & & & \gate{U}  & \qw \ \quad. }
}\end{equation*}
The controlling qubit is denoted with a solid circle when it's control-on-one, and is denoted with an empty circle when it's control-on-zero. The gate $U$ applies on the controlled qubit with a line connected to the controlling qubits. 
The SWAP gate is defined as
$
\text{SWAP}( \state{x} \otimes \state{y}) = \state{y} \otimes \state{x}, 
$
and is graphically denoted by 
\[
\Qcircuit @C=1em @R=1.4em {
\lstick{q_1} & \qswap\qwx[1] & \qw & q_2 & \\
\lstick{q_2} & \qswap        & \qw & q_1 & \quad.
}
\raisebox{-2em}
{
}
\]
In general, a unitary transformation $U\in\C^{2^n\times 2^n}$ applied to an 
$n$-qubit state $\state{\psi}$
is denoted as 
\[
\Qcircuit @C=1em @R=1em {
\lstick{q_1} & \multigate{2}{U} & \qw  & & &\\
\lstick{\raisebox{.5em}{\vdots}} & \pureghost{U} & & & & \text{or} & & & & &  \lstick{\state{\psi}} & \qw {/^n} & \gate{U} & \qw & \quad ,\\
\lstick{q_{n}}  & \ghost{U} & \qw & & & \quad
}
\]
where `$\,/\,^n$' indicates the state contains $n$ qubits.

\section{Simulation of Quantum Algorithms}
\label{sec:simulation}

Although tremendous progress has been made in the development of quantum computing
hardware~\cite{arute2019quantum,kjaergaard2020superconducting}, enormous engineering challenges remain in producing 
reliable quantum computers with a sufficient number of qubits required for
solving practical problems. However, these challenges should not prevent us from
developing quantum algorithms that can be deployed once reliable hardware becomes 
available. Our understanding of many quantum algorithms can be improved by
simulating these algorithms on classical computers. Furthermore, classical
simulations of quantum algorithms also provide a validation tool for testing 
quantum hardware on which quantum algorithms are to be executed.

Although we can in principle simulate quantum algorithms on a classical computer by 
constructing an unitary transformation as a matrix $U$ and the input state $\ket{\psi}$ as 
a vector explicitly, and performing $U\ket{\psi}$ as a matrix vector multiplication,
this approach quickly becomes infeasible as the number of simulated qubits increases.

In many quantum algorithms, the input to the quantum circuit $\ket{\psi}$ has a low CP rank,
i.e., we can rewrite $\ket{\psi}$ as
\begin{equation}
\state{\psi} = \sum_{j=1}^{R} \aij{j}{1}\otimes \aij{j}{2} \cdots \otimes \aij{j}{n},
\label{eq:psicp}
\end{equation}
where $R \ll n$ is an integer that is relatively small, and $\aij{j}{i} \in \mathbb{C}^{2}$ 
is a vector of length 2.  
The storage requirement for keeping $\ket{\psi}$ in a rank-$R$ CP format is $\mathcal{O}(2nR)$, which is significantly less that the $\mathcal{O}(2^n)$ requirement for representing $\ket{\psi}$ as a $n$-dimensional tensor or a single vector. 

Because the unitary transformation encoded in a quantum algorithm is
implemented by a quantum circuit that consists of a sequence of one- and two-qubit quantum gates as discussed in Section~\ref{sec:notation}, the
transformation can be implemented efficiently using local transformations 
that consist of multiplications of $2\times 2$ matrices with vectors of length 2, and we may be able to keep the intermediate states produced in the quantum circuit low rank also.
Let us consider the case where the input state to a quantum circuit is rank-$1$, i.e.,
\begin{equation}
\ket{\psi} = a^{(1)} \otimes a^{(2)}\otimes \cdots\otimes a^{(n)},
\end{equation}
where $a^{(i)} \in \mathbb{C}^2$.
It is easy to see that applying a one-qubit gate, such as the Hadamard gate $H$, or a two-qubit 
SWAP gate does not change the CP rank of $\ket{\psi}$. 
For example, if $H$ is applied to the first qubit of $\state{\psi}$, and the first and the last qubit
are swapped, the resulting states become
\begin{equation}
Ha^{(1)} \otimes a^{(2)}\otimes \cdots\otimes a^{(n)}, \ \ \mbox{and} \ \
a^{(n)} \otimes a^{(2)}\otimes\cdots \otimes a^{(n-1)} \otimes a^{(1)},
\end{equation}
respectively. Both are still rank-1 tensors.  
Unfortunately, not all two-qubit gate can keep the intermediate output in rank-1. 
A commonly used two-qubit gate, the controlled-$U$ gate defined by \eqref{eq:CU}, 
doubles the CP rank when it is applied to a rank-1 tensor, as we can see from 
the simple algebraic expressions below:
\begin{align}
&  (E_1 \otimes I \otimes \cdots \otimes I  + E_2 \otimes U \otimes \cdots \otimes I)\ket{\psi}  \nonumber \\
&= E_1a^{(1)} \otimes a^{(2)}\otimes \cdots \otimes a^{(n)}  +
E_2a^{(1)} \otimes Ua^{(2)}\otimes \cdots \otimes a^{(n)} \nonumber \\
&= \alpha \ket{0} \otimes a^{(2)}\otimes \cdots \otimes a^{(n)}  +
\beta \ket{1} \otimes Ua^{(2)}\otimes \cdots \otimes a^{(n)}, 
\label{eq:CUpsi}
\end{align}
where $\alpha$, $\beta$ are the first and second components of $a^{(1)}$ respectively.
As along as neither $\alpha$ or $\beta$ is zero, \eqref{eq:CUpsi} is rank-2.

Successive applications of controlled-unitary gates where the controlling qubit vary can 
rapidly increase the CP rank of the output tensor. In the worst case, the rank of the output
tensor can reach $2^n$ after $n$ controlled unitary gates are applied.  This rapid increase
in CP rank clearly diminishes the benefit of the low-rank representation.
However, the output of several quantum algorithms are expected to have only a few large 
amplitude components, i.e., they are low rank. 
Therefore, the rapid increase in the CP rank of the intermediate tensors produced at
successive stages of the quantum circuit may be due to the sub-optimal representation
of the tensor. Because the CP decomposition of a tensor is not unique, it may be possible
to find an alternative CP decomposition that has a lower rank.  When such a decomposition
does not exist, we seek to find a low rank approximation that preserves
the main feature of the quantum algorithm to be simulated.

\section{Low-rank Approximation in Quantum Algorithm Simulation} 
\label{sec:low-rank}
\label{sec:lowrank}

In this section, we discuss two techniques for reducing the rank of a CP decomposition of the
tensor in the context of quantum algorithm simulation. Before we describe the details 
of these techniques, we first outline the basic procedure of using low rank approximation
in the simulation of a quantum algorithm represented by a quantum circuit \eqref{eq:ucircuit} 
in Algorithm~\ref{alg:qsimlowrank}.

\begin{algorithm}[htbp]
    \caption{Quantum Algorithm Simulation with Low-rank Approximation}
\label{alg:qsimlowrank}
\begin{algorithmic}[1]
\small
\STATE{\textbf{Input: } An input state $\ket{\psi}$ represented in CP format \eqref{eq:psicp}. 
A quantum circuit with $D$ layers one- or two-qubit gates, i.e., $U = U^{(1)} U^{(2)} \cdots U^{(D)}$,
where $U^{(i)}$ is a Kronecker product of one- or two-qubit unitaries with $2 \times 2$ identities; maximum CP rank allowed $r_{\max}$ for any intermediate state produced in the simulation.}
\STATE{\textbf{Output: } Approximation to $\ket{\phi} = U\ket{\psi}$.
}
\FOR{\texttt{$k \in \{1, 2, ..., D\}$}}
\STATE  Compute $\ket{\phi} = U^{(k)} \ket{\psi}$;
\IF{the rank of $\ket{\phi}$ exceeds $r_{\max}$}
\STATE  Apply a rank reduction to $\ket{\phi}$ to reduce the CP rank of $\ket{\phi}$ to at most $r_{\max}$;
\ENDIF
\STATE  $\ket{\psi} \leftarrow \ket{\phi}$;
\ENDFOR
\RETURN $\ket{\phi}$ in CP decomposed form.
\end{algorithmic}
\end{algorithm}

We should note that for some quantum algorithms, the unitary transformation $U$ 
can be decomposed as
\begin{equation}
  U = \sum_{r=1}^{R_u} \mat{A}_r^{(1)}\otimes \cdots \otimes \mat{A}_r^{(n)},
\end{equation}
where $U\in\C^{2^n \times 2^n}$, $R_u \ll 2^n$, 
$\mat{A}_r^{(i)} \in \C^{2\times 2}$, $i\in\{1,\ldots, n\}$.
In this case, the multiplication of $U$ with $\state{\psi}$ results in a low-rank tensor if $\state{\psi}$ is low-rank also.  It is sometimes possible to 
obtain a good low-rank approximation of $U$ even when $U$ is not strictly low-rank~\cite{woolfe2015matrix}.  Although seeking a low rank approximation
of $U$ can enable efficient simulations of quantum algorithms with any low-rank input states, it is a harder problem to solve than finding a low-rank approximation of intermediate tensor.  In this paper, we will not discuss this approach.

To simplify our discussion, we define the matrix
\[
A^{(i)} = \left[
\aij{1}{i} \: \aij{2}{i} \: \cdots \: \aij{R}{i} 
\right],
\]
for $i = 1,2,...,n$, where $\aij{i}{j}$'s are $2\times 1$ vectors that appear in
\eqref{eq:psicp}, and sometimes use the short-hand notation
\begin{equation}
\left\{ \mat{A}^{(1)}, \cdots , \mat{A}^{(n)} \right\}
\label{eq:cprep}
\end{equation}
to denote the tensor $\ket{\psi}$ \eqref{eq:psicp} in CP representation.
The $p$th term of \eqref{eq:psicp} is denoted by
\[
\ket{\psi_p} = \left\{ \aij{p}{1}, \aij{p}{2}, ..., \aij{p}{n}\right\}.
\]

\subsection{Direct Elimination of Scalar Multiples}\label{subsec:direct_elimination}
If the $p$th term in \eqref{eq:psicp} is a scalar multiple of the $q$th term, for $p\neq q$, these two terms can be combined. As a result, the effective rank of $\ket{\psi}$ can be lowered. As we will see in subsequent sections, scalar multiples of the same rank-1 tensor do appear in intermediate states of a quantum circuit for some quantum algorithms. Therefore, detecting such 
scalar multiples and combining them is an effective strategy for reducing 
the CP rank of intermediate states in the simulation of the quantum algorithm.

One way to check whether the $p$th term in \eqref{eq:psicp} is a scalar multiple of the $q$th term is to compute the cosine of the angle between these two rank-1 tensors defined by
\begin{equation}\label{eq:cos}
    \cos \left(\theta_{p,q}\right) = \frac{\braket{\psi_p|\psi_q}}{\|\psi_p\| \cdot \|\psi_q\|},
\end{equation}
where the inner product $\braket{\psi_p|\psi_q}$ can be easily 
computed as
\[
\braket{\psi_p|\psi_q} = \braket{\aij{p}{1},\aij{q}{1}}\cdot
\braket{\aij{p}{2},\aij{q}{2}}
\cdots
\braket{\aij{p}{n},\aij{q}{n}},
\]
and $\|\psi_p\|$ is the 2-norm of $\ket{\psi_p}$ defined as
\[
\|\psi_p\| = \sqrt{\braket{\psi_p|\psi_p}}.
\]
If $|\cos\left(\theta_{p,q}\right)|$ is 1.0, $\ket{\psi_q}$ is a scalar multiple of  
$\ket{\psi_p}$. It can be combined with $\ket{\psi_p}$ as
\begin{equation}\label{eq:eliminate_scalar}
    \ket{\psi_p} \leftarrow \left(1+\frac{\cos\left(\theta_{p,q}\right)\beta}{\alpha}\right) \ket{\psi_p},
\end{equation}
where $\alpha = \|\psi_p\|$ and $\beta = \|\psi_q\|$.

Algorithm~\ref{alg:redmultiple} gives a procedure of detecting 
and combining scalar multiples of rank-1 terms in a tensor $\ket{\psi}$ 
in CP format.
Note that Algorithm~\ref{alg:redmultiple} essentially computes
the Gram matrix $G$ associated with all rank-1 terms in the 
CP decomposition of $\ket{\psi}$, where the $(p,q)$th element of $G$ is the 
cosine of the angle between the $i$th and $j$th terms. 
If $G$ is rank deficient, which can be determined by performing singular 
value decomposition of $G$, $\ket{\psi}$ can be expressed as 
a linear combination of fewer tensors (viewed as vectors). 
However, each one of these tensor may not have a rank-1 CP form. 
Therefore, this approach does not necessarily yield a rank reduction in 
CP format.

\begin{algorithm}[htbp]
    \caption{Detect and Combine rank-1 terms in a tensor $\ket{\psi}$ in a CP format}
\label{alg:redmultiple}
\begin{algorithmic}[1]
\small
\STATE{\textbf{Input: }$\{ \mat{A}^{(1)}, \cdots , \mat{A}^{(n)} \}$, where $\mat{A}^{(i)} \in \mathbb{C}^{2\times R}$,
}
\STATE{\textbf{Output: } $\{ \mat{B}^{(1)}, \cdots , \mat{B}^{(n)} \} = \{ \mat{A}^{(1)}, \cdots , \mat{A}^{(n)} \}$, where $\mat{B}^{(i)} \in \mathbb{C}^{2\times s}$, with $s \leq R$.}
\STATE{Initialize $\mat{B}^{(i)}\leftarrow\mat{A}^{(i)}$ for $i\in\{1,\ldots,n\}$
}
\STATE{$K\leftarrow R$}
\STATE{$p \leftarrow 1$}
\WHILE{\texttt{$p\leq K$}}
\STATE $l \leftarrow \{\}$ 
\FOR{$q\in \{p+1, \ldots, K\}$}
\STATE{Calculate $\cos \left(\theta_{p,q}\right)$ based on \eqref{eq:cos}}

\IF {$|\cos \left(\theta_{p,q}\right)|=1$}
    \STATE{Update $\ket{\psi_p}:= \{ \bij{p}{1}, \bij{p}{2}, ..., \bij{p}{n}\}$ based on \eqref{eq:eliminate_scalar}}
    \STATE Append $q$ to $l$
\ENDIF
\ENDFOR
\STATE{Remove the columns $b_i^{(k)}$ from $B^{(k)}$ when indices $i$ appear in $l$, for $k\in\{1,\ldots,n\}$}
\STATE $K\leftarrow\text{number of columns of } \mat{B}^{(1)}$
\STATE $p \leftarrow p + 1$
\ENDWHILE
\RETURN $\{ \mat{B}^{(1)}, \ldots , \mat{B}^{(n)} \} $
\end{algorithmic}
\end{algorithm}

\subsection{Low-rank Approximation via Alternating Least Squares}
\label{sec:cp}

A more general way to reduce the rank of a tensor in CP format is to 
formulate the rank reduction problem as an optimization problem and 
solve the problem using a numerical optimization technique.
To reduce the rank of $\state{\psi}$ denoted by \eqref{eq:cprep}, 
from $R$ to $s < R$, we seek the solution to the following nonlinear least squares problem
\begin{equation}
\min_{\mat{B}^{(1)}, \cdots , \mat{B}^{(n)}}
 \frac{1}{2}\left\|
\left\{ \mat{B}^{(1)}, \cdots , \mat{B}^{(n)} \right\} -
\left\{ \mat{A}^{(1)}, \cdots , \mat{A}^{(n)} \right\}
\right\|_F^2,
\label{eq:alsmin}
\end{equation}
where the rank of $B^{(i)}$ is $s$.

A widely used method for solving \eqref{eq:alsmin} is the alternating 
least squares (ALS) method, which we will refer to as the CP-ALS method.
Given a starting guess of $\{ \mat{B}^{(1)}, \cdots , \mat{B}^{(n)} \}$, 
CP-ALS seeks to update one component $B^{(i)}$ at a time while $B^{(j)}$'s are
fixed for $j\neq i$. Such an update can be obtained by solving a 
linear least squares problem.
The solution of the linear least squares problem satisfies the 
normal equation
     \begin{equation}
     \label{eq:normal}
     \mat{B}^{(i)}\Gamma^{(i)}= \mat{T^{(\psi)}}_{(i)}\mat{P}^{(i)},
     \end{equation}
where $\mat{T^{(\psi)}}_{(i)} \in \C^{2\times 2^{n-1}}$ is $\tsr{T}^{(\psi)}$ (the tensor view of $\ket{\psi}$) matricized along the $i$th mode, the matrix $\mat{P}^{(i)}\in \C^{2^{n-1} \times r}$ is formed by Khatri-Rao products of $B^{(j)}$'s for $j \neq i$, i.e.,
\begin{equation}
    \mat{P}^{(i)}=\mat{B}^{(1)\dagger} \odot \cdots \odot  \mat{B}^{(i-1)\dagger}  \odot  \mat{B}^{(i+1)\dagger} \odot \cdots \odot \mat{B}^{(n)\dagger},
\end{equation}
and $\mat{\Gamma}^{(i)}\in\C^{R\times R}$ can be computed via a sequence of Hadamard products,
\begin{equation}
\Gamma^{(i)}=\mat{S}^{(1)}\ast\cdots\ast \mat{S}^{(i-1)}  \ast \mat{S}^{(i+1)}\ast\cdots\ast \mat{S}^{(n)},
\label{eq:hadamard}
\end{equation}
with each $\mat{S}^{(i)} = \mat{B}^{(i)T}\mat{B}^{(i)\dagger}.$
The \textit{Matricized Tensor Times Khatri-Rao Product} or MTTKRP computation $\mat{M}^{(i)}=\mat{T^{(\psi)}}_{(i)}\mat{P}^{(i)}$ is the main computational bottleneck of CP-ALS.
 The computational cost of MTTKRP is $\Theta(2^nR)$.
 Because $\tsr{T}^{(\psi)}$ is already decomposed in CP format \eqref{eq:cprep}, the MTTKRP computation used in \eqref{eq:normal} can be computed efficiently via
\begin{equation}
\label{eq:mttkrp-implicit}
\mat{M}^{(i)} = \mat{A}^{(i)} \Big( \bigast_{j\in\{1,\ldots,n\}, j\neq i} 
(\mat{A}^{(j)T} \mat{B}^{(j)\dagger})
\Big),
\end{equation}
with complexity $\bigo(nsR)$. The algorithm is described in Algorithm~\ref{alg:cp_als}. Consider the case where $t$ iterations are performed in the ALS procedure, Algorithm~\ref{alg:cp_als} has the memory cost of $\bigo(Rn + s^2)$ and the computational cost of $\bigo(t(Rsn^2 + s^3n))$ (the term $Rsn^2$ is the cost of \eqref{eq:hadamard},\eqref{eq:mttkrp-implicit} and the term $s^3n$ is the cost of performing linear system solves), yielding an asymptotic computational cost of $\bigo(Rsn^2 + s^3n)$ consider that $t$ is usually a constant. 

\begin{algorithm}
    \caption{ALS procedure for CP decomposition of an implicit tensor}
\label{alg:cp_als}
\begin{algorithmic}[1]
\small
\STATE{\textbf{Input: }$\{ \mat{A}^{(1)}, \cdots , \mat{A}^{(n)} \}$, compression rank $s$
}
\STATE{Initialize $\mat{B}^{(i)}$ as uniform random matrices within $[0,1]$, $\mat{S}^{(i)} \leftarrow \mat{B}^{(i)T}\mat{B}^{(i)\dagger}$ for $i\in\{1,\ldots,n\}$
} \label{line:init}
\WHILE{not converge}
\FOR{\texttt{$i\in \inti{1}{n} $}}
\STATE\label{line3}{${\Gamma}^{(i)}\leftarrow\mat{S}^{(1)}\ast\cdots\ast \mat{S}^{(i-1)}\ast \mat{S}^{(i+1)}\ast\cdots\ast \mat{S}^{(n)} $}
\STATE{Update $ \mat{\M}^{(i)}$ based on \eqref{eq:mttkrp-implicit}}
\STATE{$   \mat{B}^{(i)} \leftarrow \mat{\M}^{(i)}{\Gamma}^{(i)}{}^{+}, \quad   \mat{S}^{(i)} \leftarrow \mat{B}^{(i)T}\mat{B}^{(i)\dagger}
$
}
\ENDFOR
\ENDWHILE
\RETURN $\{ \mat{B}^{(1)}, \ldots , \mat{B}^{(n)} \}$
\end{algorithmic}
\end{algorithm}

\subsection{Fidelity Estimation}
\label{subsec:fidel}
Consider two states $|\psi_P\rangle$ and $|\psi_T\rangle$, where $|\psi_P\rangle$ denotes the perfect/accurate state and $|\psi_T\rangle$ denotes the truncated (low-rank approximated) state. 
The fidelity $\mathcal{F}$ of $|\psi_T\rangle$ in approximating $|\psi_P\rangle$ is defined as
\begin{equation}
\mathcal{F}(|\psi_P\rangle,|\psi_T\rangle) = |\langle \psi_P|\psi_T \rangle|^2.
\end{equation}
Consider a circuit consisting of $D$ layers of quantum gates. Each layer is denoted by $U^{(i)}$, where $i\in\{1, \ldots, D\}$. Let the truncated state resulting from the application of the first $i$ layers of gates be $|\psi_T(i)\rangle$, i.e., $|\psi_T(i)\rangle$ is the output of performing rank reduction on the state $U^{(i)}| \psi_T(i-1) \rangle$. Define the local fidelity $f_i$ as the fidelity of this rank reduction:
\begin{equation}
f_i = |\langle \psi_T(i) |U^{(i)}| \psi_T(i-1) \rangle|^2, 
\end{equation}
 the global fidelity $\mathcal{F}$ can be approximated by the products of all the local fidelity:
\begin{equation}
\mathcal{F}(|\psi_P(D)\rangle,|\psi_T(D)\rangle) 
= |\langle \psi_P(D)|\psi_T(D) \rangle|^2
\approx \prod_{i=1}^D f_i.
\label{eq:fidel_est}
\end{equation}
Note that this approximation is not restricted to a specific low-rank approximation format. Our experimental results show that this approximation is accurate when approximating the state with the CP representation. Reference~\cite{zhou2020limits} showed that it's accurate when performing the approximation with the MPS representation.

To check the convergence of CP-ALS, we calculate the local fidelity of the CP decomposition after each ALS iteration. 
Consider that $U^{(i)}| \psi_T(i-1) \rangle$ is represented in the CP format by $\{ \mat{A}^{(1)}, \cdots , \mat{A}^{(n)} \}$ and $| \psi_T(i) \rangle$ is represented in the CP format by $\{ \mat{B}^{(1)}, \ldots , \mat{B}^{(n)} \}$, the fidelity $f_i$ can be efficiently calculated by 
\begin{equation}
f_i = \left[e^T\Big( \bigast_{j\in\{1,\ldots,n\}} 
(\mat{B}^{(j)T} \mat{A}^{(j)\dagger}) \Big) e\right]^2,
\label{eq:cpd-inner}
\end{equation}
where $e$ is an all ones vector. In this way, the computational cost of both the inner product and the fidelity calculation is $\bigo(Rsn)$. In addition, \eqref{eq:cpd-inner} can also be used to calculate the Frobenius-norm of a tensor $\tsr{T}$ represented in the CP representation,
$\|\tsr{T}\|_F = \sqrt{|\langle \tsr{T}|\tsr{T} \rangle|}$.

\section{Quantum Fourier Transform and Phase Estimation}
\label{sec:qft}

\subsection{Quantum Fourier Transform}
\label{subsec:qft}
\begin{figure}[t]
  \centering
  \[
\Qcircuit @C=.9em @R=.9em {
\lstick{q_1} & \gate{H} & \gate{R_{n}} & \gate{R_{n-1}} & \qw & \cdots & & \gate{R_2} & \qw & \qw & \qw &  \qw & \cdots & & \qw & \qw & \qw & \qw & \qswap & \qw & \qw \\
\lstick{q_2}  & \qw & \qw & \qw  & \qw  & \cdots & & \ctrl{-1} & \gate{H} & \gate{R_{n-1}} & \gate{R_{n-2}} & \qw & \cdots & & \qw & \qw & \qw & \qw & \qw & \qswap & \qw \\
\lstick{\raisebox{.5em}{\vdots}} & & & & & \raisebox{.5em}{\vdots} & & & & & & &  \raisebox{.5em}{\vdots}  \\
\lstick{q_{n-1}}  & \qw & \qw & \ctrl{-3} & \qw  & \cdots & & \qw & \qw & \qw & \ctrl{-2} & \qw & \cdots & &\gate{H} & \gate{R_2} & \qw & \qw & \qw & \qswap \qwx[-2] & \qw \\
\lstick{q_n}  & \qw & \ctrl{-4} & \qw & \qw & \cdots & & \qw & \qw & \ctrl{-3} & \qw & \qw & \cdots & & \qw & \ctrl{-1} & \gate{H} & \qw & \qswap \qwx[-4] & \qw & \qw
}
  \]  
  \caption{Circuit representation for quantum Fourier transform.}
  \label{fig:qft_circuit}
\end{figure}
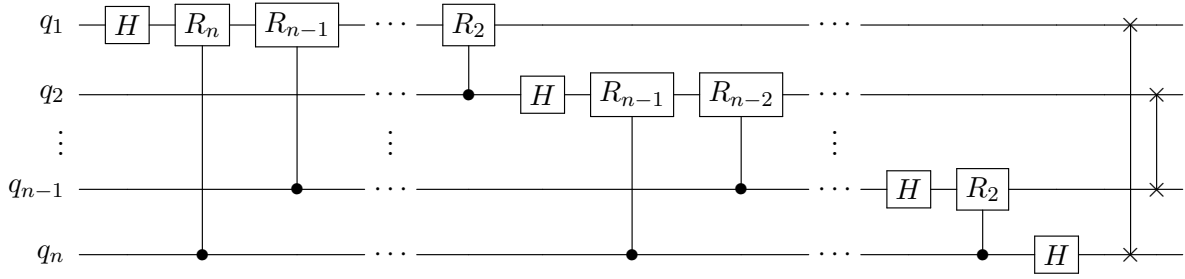

The quantum Fourier transform (QFT) uses a special decomposition~\cite{camps2020quantum,coppersmith2002approximate} 
of the discrete Fourier transform $F^{(N)}$ define by
\begin{equation}
F^{(N)} := \frac{1}{\sqrt{N}}
\begin{bmatrix}
\omega_{N}^{0} & \omega_{N}^{0} & \omega_{N}^{0} & \cdots & \omega_{N}^{0}\\
\omega_{N}^{0} & \omega_{N}^{1} & \omega_{N}^{2} & \cdots & \omega_{N}^{N-1}\\
\omega_{N}^{0} & \omega_{N}^{2} & \omega_{N}^{4} & \cdots & \omega_{N}^{2(N-1)}\\
\vdots & \vdots & \vdots & \ddots & \vdots \\
\omega_{N}^{0} & \omega_{N}^{N-1} & \omega_{N}^{2(N-1)} & \cdots & \omega_{N}^{(N-1)(N-1)}
\end{bmatrix} \in \C^{N\times N},
\label{dftmat}
\end{equation}
where $N=2^n$, and the output is $y = F^{(N)} x$ for the input vector $x\in\C^N$.
We show the quantum circuit for QFT in Figure~\ref{fig:qft_circuit}. As is shown in the figure, a $n$-qubit QFT circuit consists of $n$ 1-qubit Hadamard gates, $\lfloor N/2 \rfloor$ SWAP gates, and $n-1$ controlled unitary ($R_i$) gates. Without rank reduction, applying each controlled-$R_i$ gate can double the rank of the input CP tensor as shown in \eqref{eq:CUpsi}. 
The successive application of all controlled-$R_i$ gates in a QFT circuit can
ultimately increase the CP rank of the output tensor exponentially.

However, for some specific input states, it is possible to represent or approximate the output tensors at different layers of the circuit with low-rank tensors.
In the following theorem we show that, if the input state is a standard basis, all the intermediate states in the QFT circuit are rank 1 tensors. 
By a standard basis, we mean a unit vector of the form
\begin{equation}
    \state{i_1}\otimes\state{i_2} \otimes \cdots \otimes \state{i_n},
\end{equation}
where $i_j \in \{0,1\}$ for $j \in \{1,2,...,n\}$.

\begin{thm}
\label{thm: dft_basis}
All the intermediate states in a QFT circuit are rank 1 if the input to the circuit is a standard basis.
\end{thm}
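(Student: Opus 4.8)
The plan is to track the state through the QFT circuit layer by layer and show that each gate, acting on the running rank-1 tensor, produces another rank-1 tensor. The key structural observation is that the QFT on a standard basis input $\ket{i_1 i_2 \cdots i_n}$ outputs a product state: explicitly, the well-known identity
\begin{equation}
F^{(2^n)}\ket{i_1 i_2 \cdots i_n} = \frac{1}{\sqrt{2^n}} \bigotimes_{j=1}^{n} \left( \ket{0} + e^{2\pi i \, 0.i_j i_{j+1}\cdots i_n}\ket{1} \right),
\label{eq:qft-product}
\end{equation}
where $0.i_j i_{j+1}\cdots i_n$ denotes the binary fraction $\sum_{\ell \ge j} i_\ell 2^{-(\ell - j + 1)}$. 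This already shows the \emph{final} state is rank 1; the content of the theorem is that \emph{every} intermediate state is too. So first I would fix the input $\ket{i_1\cdots i_n}$ and argue by induction on the number of gates applied, maintaining the invariant that the current state is a single outer product $b^{(1)} \otimes \cdots \otimes b^{(n)}$ with each $b^{(k)} \in \C^2$.

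For the inductive step I would check each of the three gate types appearing in Figure~\ref{fig:qft_circuit}. A Hadamard $H$ on qubit $k$ replaces $b^{(k)}$ by $Hb^{(k)}$ and leaves the outer-product form intact (rank preserved, as already noted in Section~\ref{sec:simulation}); similarly the SWAP gates at the end just permute the factors. The only nontrivial case is a controlled-$R_m$ gate with control on some qubit $c$ and target on some qubit $t$. In general \eqref{eq:CUpsi} shows this doubles the rank — \emph{unless} the control factor $b^{(c)}$ is proportional to $\ket{0}$ or to $\ket{1}$, in which case the controlled gate acts as either the identity or as $I\otimes R_m$ on the relevant qubits, each of which preserves rank 1. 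So the crux is to verify that, at the moment a controlled-$R_m$ gate fires, its control qubit is in a computational basis state. Reading off the circuit, the control of every $R_m$ gate acting in the block headed by $H$ on qubit $k$ is one of the qubits $q_{k+1},\dots,q_n$, and none of those qubits has been touched by any Hadamard yet (the Hadamards are applied top to bottom, and within a block the controlled rotations come before the $H$ on the next qubit). A qubit that starts in $\ket{i_j}$ and has only been acted on by controlled-$R$ gates \emph{as a control} is unchanged, hence still equals $\ket{i_j}$, a basis state. This is the key step, and I expect the main work — really the only subtlety — to be in stating the gate ordering precisely enough that this "control qubits are still basis states" claim is manifestly true; it is a bookkeeping argument about the circuit structure rather than a computation.

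Finally I would assemble the induction: starting from the rank-1 input, every gate in the circuit maps a rank-1 state to a rank-1 state, so all intermediate states — and in particular the output, consistent with \eqref{eq:qft-product} — are rank 1. As a sanity check one can verify that after the $H$ on $q_k$ and the subsequent controlled rotations in its block, the $k$-th factor has become $\ket{0} + e^{2\pi i\, 0.i_k i_{k+1}\cdots i_n}\ket{1}$ up to normalization, matching \eqref{eq:qft-product} after the final SWAPs reverse the qubit order; this confirms both the invariant and the standard QFT formula simultaneously.
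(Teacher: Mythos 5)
Your proposal is correct and follows essentially the same route as the paper's proof: the key observation in both is that each controlled-$R_m$ gate fires while its control qubit is still in a computational basis state (having been acted on only as a control, never by a Hadamard), so only one term of \eqref{eq:CUpsi} survives and the state stays rank 1, while $H$ and SWAP trivially preserve rank. Your added induction framing and the explicit product-form sanity check are more detailed than the paper's sketch but do not change the argument.
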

\begin{proof}
The proof relies on the observation that the input factor to each controlling 
qubit in the QFT circuit is always either $\ket{0}$ or $\ket{1}$, if the input 
to the circuit is a standard basis. A controlled unitary does not change that
factor and keeps the output as a rank-1 tensor. 
For example, when the first factor of the input rank-1 tensor is $\ket{0}$ or
$\ket{1}$, only one of the two terms in \eqref{eq:CUpsi} 
is retained, and the rank of the output tensor remains rank-1.
In addition, because the 1-qubit Hadamard ($H$) gate and SWAP gate 
do not change the CP rank of an input tensor,
all the intermediate output tensors at each layer of the circuit 
remain rank 1.
\end{proof}
Theorem~\ref{thm: dft_basis} suggests that the simulation of QFT with an 
standard basis as the input can be simulated with $\bigo(n)$ memory. Because 
the application of each 1-qubit and 2-qubit gate costs $\bigo(1)$ operations, 
the overall computational cost of the simulation is $\bigo(n^2)$. 
When the input state is a linear combination of $l$ standard basis, the CP ranks of all  intermediate states are bounded by $l$. The memory cost for simulating such a QFT is $\bigo(ln)$ and the computational cost of the simulation is $\bigo(ln^2)$.

Note that the analysis above can be extended to the analysis for the inverse of QFT (QFT$^{-1}$), which inverts the input and the output of the QFT circuit shown in Figure~\ref{fig:qft_circuit}. If the output of QFT$^{-1}$ is a standard basis, then all the intermediate states will have rank 1. This is because the QFT$^{-1}$ circuit can be expressed as 
\begin{equation}
    U = U^{(D)-1}  U^{(2)-1}  \cdots U^{(1)-1},
\end{equation}
where $U^{(1)} U^{(2)} \cdots U^{(D)}$ makes the QFT circuit. Therefore, the intermediates of QFT$^{-1}$ are the same as those in QFT, but in a reversed order.

\subsection{Phase Estimation}
One of the main applications of QFT is phase estimation~\cite{cleve1998quantum}. The goal of phase estimation is to estimate an eigenvalue of a unitary operator $U$ corresponding to a specific eigenvector $\state{\psi}$. All eigenvalues of $U$ are on the unit circle, which can be represented by $e^{i2\pi\theta}$ for some phase angle $\theta$. 
We assume that $\state{\psi}$ can be prepared somehow, and there exists an ``oracle" that performs $U^{2^j}\state{\psi}$ for $j\in\{0,\cdots,n-1\}$.

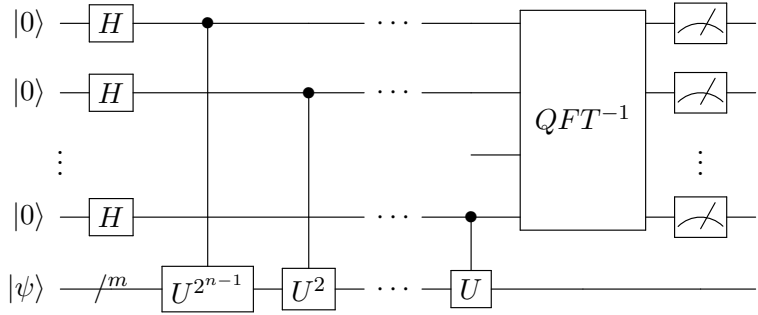
\begin{figure}[]
  \centering
  \[
  \Qcircuit  @C=1em @R=1em {
    \lstick{\ket{0}}    &      \gate{H}  & \ctrl{4}  & \qw           & \qw & \cdots & & \qw                      & \multigate{3}{QFT^{-1}} & \meter & \qw \\
    \lstick{\ket{0}}    &      \gate{H}  & \qw       & \ctrl{3}      & \qw & \cdots & & \qw                      & \ghost{QFT^{-1}}        & \meter & \qw \\
    \vdots              &           &           &               &     &        & &                          & \ghost{QFT^{-1}}        & \vdots &     \\
    \lstick{\ket{0}}    & \gate{H}       & \qw       & \qw           & \qw & \cdots & & \ctrl{1}                 & \ghost{QFT^{-1}}        & \meter & \qw \\
    \lstick{\ket{\psi}} & {/^m} \qw & \gate{U^{2^{n-1}}}  & \gate{U^2}    & \qw & \cdots & & \gate{U}         & \qw                   & \qw    & \qw
  }
  \]  
  \caption{Circuit representation for phase estimation.}
  \label{fig:phase_circuit}
\end{figure}

The quantum circuit that performs phase estimation takes two registers as the input. The first register is initialized as $\state{0\cdots 0}$. The number of qubits ($n$) contained in the register depends on how accurately we want to represent $\theta$ as a binary.  The second register is used to prepare the target eigenvector $\state{\psi}$.
The circuit consists of a set of Hadamard gates applied to the first register, followed by a sequence of controlled-$U^{2^j}$ gates as shown in Figure~\ref{fig:phase_circuit}.
After applying these gates, we obtain the following rank-1 tensor in the first register,
\begin{equation}
    \frac{1}{2^{n/2}}
    \left(\state{0} + e^{i2\pi 2^{n-1} \theta} \state{1}\right) \otimes
    \cdots \otimes
    \left(\state{0} + e^{i2\pi 2^{1} \theta} \state{1}\right) \otimes
    \left(\state{0} + e^{i2\pi 2^{0} \theta} \state{1}\right).
    \label{eq:qftinv_input}
\end{equation}
When the above state is used as the input to QFT$^{-1}$, we obtain the following output,
\begin{equation}
    \frac{1}{2^n}\sum_{x=0}^{2^n-1}\sum_{k=0}^{2^n-1}e^{-\frac{2\pi ik}{2^n}(x-2^n\theta)} \state{x}.
    \label{eq:phaseout}
\end{equation}
If $\theta$ satisfies 
\begin{equation}
x = 2^n\theta,
\label{eq:xtheta}
\end{equation} 
the amplitude of $\ket{x}$ is 1 and the amplitude of $\ket{y}$ is zero for 
$y \neq x$.  As a result, \eqref{eq:phaseout} is a standard basis and is 
rank-1.  Furthermore, because the inverse QFT circuit is identical to the 
QFT circuit, but with the input and output reversed, the intermediate output 
at each layer of the inverse QFT circuit should be rank-1 when 
\eqref{eq:qftinv_input} is the input and \eqref{eq:xtheta} holds exactly.
In this case, the phase estimation algorithm can be simulated efficiently.

When \eqref{eq:xtheta} holds only approximately, rank reduction
will be needed during the simulation of the inverse QFT to keep the intermediate
output at each layer of the inverse QFT circuit low rank. We can use the
techniques discussed in section~\ref{sec:low-rank} to perform rank reduction.
In Appendix~\ref{appendix:phase} we provide an analysis that shows all the intermediate states on the first register of the phase estimation circuit can be approximated by a low-rank state whose CP rank is bounded by $\bigo(1/\epsilon)$, and the output state fidelity is at least $1-\epsilon$.

When the input state of the second register is a linear combination of $l$ eigenvectors, for $l\ll n$, which occurs in applications such as noisy phase estimation~\cite{o2019quantum} and NMRS quantum walk based search~\cite{santha2008quantum}, the output of the circuit has $l$ large amplitudes. This type of phase estimation can also be simulated efficiently.

\section{Grover's Algorithm}
\label{sec:grover}

Search is a common problem in information science. Grover's algorithm~\cite{grover1996fast} achieves quadratic speed-up compared to the classical search algorithms. We first examine the possibility to simulate the Grover's algorithm with only one marked item using the CP representation of the tensor, then generalize the analysis to the cases with multiple marked items.

\subsection{Search with One Marked Item}

The goal of the search problem is to find a particular item $x^{\ast}$ called the marked item from a set ($X$) of $N = 2^n$ items that contains $x^{\ast}$. 
On a classical computer, the worst case complexity of finding $x^{\ast}$ is 
$\bigo(N)$.
On a quantum computer, one can use the Grover's algorithm to find $x^{\ast}$  in $\bigo(\sqrt{N})$ steps.
In this algorithm, each item in the set to be searched is mapped to a basis state in an $2^n$-dimensional Hilbert space.
The algorithm involves applying an unitary transformation of the form,
\begin{equation}
    \Ub{g} = \Ub{o} \Ub{d},
    \label{eq:ug}
\end{equation}
successively to a superposition of all basis states, $\state{h} = H^{\otimes n} \state{0^{n}}$, where
$\Ub{o}$ is known as an \textit{oracle} that recognizes the item to be searched but does not provide the location of the item, and $\Ub{d}$ is known as a diffusion operator, which is a reflector to be defined below.  It is well known that after $\mathcal{O}(\sqrt{N})$  successive applications of $\Ub{g}$ to $\ket{h}$, the amplitude associated with $\ket{x^*}$ becomes close to 1, while the amplitudes associated with other basis states become close to 0.

The oracle can be defined as 
\begin{equation}
\Ub{o} \state{x} = (-1)^{f(x)} \state{x}, \quad \text{where} \quad f(x) = \left\{ \begin{array}{cc}
1  & \mbox{if} \ \ x=x^{\ast},  \\
0  & \mbox{otherwise}.
\end{array}
\right.
\end{equation}
This oracle is a unitary, which can be implemented as a generalized controlled-NOT gate show in Figure~\ref{fig:grover}.
The diffusion operator is defined as 
\begin{equation}
    \Ub{d}  = 2 \state{h}\stateconj{h} - I.
    \label{eq:Ud}
\end{equation}
Because $\state{h}$ can be obtained by applying Kronecker products of Hadamard matrices to the standard basis state $\state{0^n}$, we can write \eqref{eq:Ud} as
\begin{equation}
\Ub{d} =  H^{\otimes^n} \left(
2 \state{0}^{\otimes^n}\stateconj{0}^{\otimes^n} - I
\right) H^{\otimes^n}.
\end{equation}
This unitary can be implemented by
a layer of Hadamard gates followed by a generalized controlled-NOT, followed by another layer of Hadamard gates as shown in Figure~\ref{fig:grover}.

We show below that all the intermediate states produced at different layers of the circuit for Grover's algorithm can be represented by 
a linear combination of $\ket{h}$ and $\ket{x^*}$, which are both rank-1, 
if the input to the circuit is $\ket{h}$.
 
Without loss of generality, let us assume that $\Ub{g}$ is applied to the 
input $\state{x} = \alpha \state{h} + \beta \state{x^*}$. When $\ket{x}$
is the input to the entire circuit, we have $\alpha=1$ and $\beta=0$.
Applying $\Ub{o}$ to $\ket{x}$ yields
\begin{equation}
\Ub{o} \state{x} = \alpha\left(\state{h} - \frac{1}{\sqrt{N}}\state{x^*}\right) - \alpha \frac{1}{\sqrt{N}}\state{x^*} - \beta \state{x^*} = 
\alpha\state{h} - \left(\alpha \frac{2}{\sqrt{N}} + \beta\right) \state{x^*},
\label{eq:Uox}
\end{equation}
which remains in the span of $\state{h}$ and $\state{x^*}$. 
The application of $U_d$ to $U_o \ket{x}$ starts with the application
of $H^{\otimes^n}$ to last expression in \eqref{eq:Uox}.  This yields 
$\alpha\state{0^n} - (\alpha \frac{2}{\sqrt{N}} + \beta) H^{\otimes^n}\state{x^*}$, which is in the span of $\state{0^n}$ and $H^{\otimes^n}\state{x^*}$. The subsequent application of the reflector $2 \state{0^n}\stateconj{0^n} - I$ still keeps the result in $\mathrm{span}\{\ket{0},H^{\otimes^n}\state{x^*}\}$. 
Applying $H^{\otimes^n}$ again to $\ket{0}$ and $H^{\otimes^n}\state{x^*}$ 
respectively brings them back to $\state{h}$ and $\state{x^*}$. 
Therefore, the CP ranks of all intermediate states are at most 2. 


\begin{figure}[]
  \centering
  \[
\Qcircuit @C=.9em @R=.9em {
& \mbox{$U_o$} & & & & \mbox{$-U_d$}\\
\lstick{q_1} & \ctrl{1} & \qw & \gate{H} & \qw & \ctrlo{1} & \qw & \gate{H} & \qw & \qw\\
\lstick{q_2} & \ctrl{1} & \qw & \gate{H} & \qw & \ctrlo{1} & \qw & \gate{H} & \qw & \qw\\
\lstick{\raisebox{.5em}{\vdots}} & \ctrl{1} & \qw & \gate{H} & \qw & \ctrlo{1} & \qw & \gate{H} & \qw & \qw\\
\lstick{q_{n-1}}  & \ctrl{1} & \qw & \gate{H} & \qw & \ctrlo{1} & \qw & \gate{H}  & \qw & \qw\\
\lstick{q_n} & \gate{Z} & \qw & \gate{H} & \gate{X} & \gate{Z} & \gate{X} & \gate{H} & \qw & \qw
\gategroup{2}{2}{6}{2}{.7em}{--}
\gategroup{2}{4}{6}{8}{.7em}{--}
}
  \]  
  \caption{Circuit implementation for the operator $\Ub{g}$ in Grover's algorithm. The implementation assumes that the marked state is $\state{x^*}=\state{11\ldots 1}$.}
  \label{fig:grover}
\end{figure}
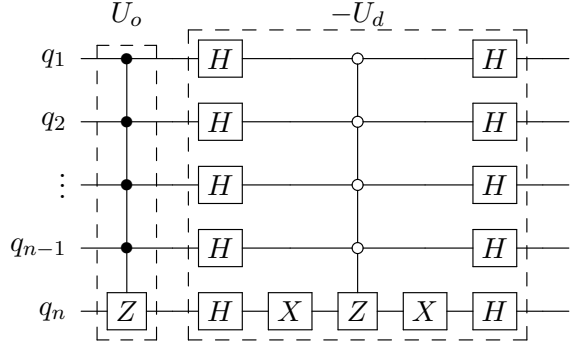

\subsection{Search with Multiple Marked Items}
We now generalize the analysis to the cases with multiple marked items. 
The problem can be cast as the problem of finding an arbitrary element in the non-empty marked subset, $A$, from the set $X$. When mapped to the quantum circuit, the set $A$ and the unmarked set $B$ can be expressed as 
\begin{align*}
A = \{x \in \{0,1\}^n \quad : \quad f(x)=1\}, \\
B = \{x \in \{0,1\}^n \quad : \quad f(x)=0\}.
\end{align*}
Let $a = |A|$ and $b = |B|$, we have $a+b=N=2^n$.
Grover's algorithm starts from $\state{h}$, and applies the operator $\Ub{g}$
for $\lfloor \frac{\pi}{4} \sqrt{\frac{N}{a}} \rfloor$ times. Let 
\begin{equation}
\state{A} = \frac{1}{\sqrt{a}} \sum_{x\in A }\state{x} \quad \text{and} \quad \state{B} = \frac{1}{\sqrt{b}} \sum_{x\in B }\state{x},
\end{equation}
as is shown in~\cite{nahimovs2010note}, application of $\Ub{g}$ on $\state{A}$ and $\state{B}$ results in
\begin{equation}
\Ub{g}\state{A} = (1 - \frac{2a}{N}) \state{A} - \frac{2\sqrt{ab}}{N}\state{B} \quad \text{and} \quad
\Ub{g}\state{B} = \frac{2\sqrt{ab}}{N}\state{A} - (1-\frac{2b}{N})\state{B}.
\end{equation}
Because the initial state is in the space $S$ spanned by $\state{A}$ and $\state{B}$, all the intermediate states are all in the space $S$. 


It is clear that $\ket{A}$ is low rank if $a \ll 2^n$ because it can be written as a linear combination
of $a$ standard bases.  Although $\ket{B}$ may not appear to be low rank because it is a linear
combination of many standard bases, it is actually low rank because $\ket{B}$ can be written
as $\ket{B} = (2^{n/2} \ket{h} - \sqrt{a} \ket{A})/\sqrt{b}$. Since $\ket{h}$ is rank-1, the rank of 
$\ket{B}$ is $a+1$. As a result, all intermediate states in the Grover's algorithm are low rank.

The observation we made above suggests that the states that emerge from the applications of $U^{(g)}$ are low-rank. However, since both $U^{(o)}$ and $U^{(d)}$ are implemented with 
controlled-unitary gates, applying $U^{(g)}$ directly will increase the rank of the intermediate states. Rank reduction techniques
described in Section~\ref{sec:low-rank} need to be used to keep the ranks of these tensors low. The complexity of the rank reduction procedure is thus closely related to the rank of the intermediate states emerging from direct applications of $U^{(g)}$ to the input. A detailed analysis, which is presented in Appendix~\ref{appendix:grover}, shows that the rank of the state input to the rank reduction procedure is at most $2(a+1)$ times the optimal rank ($a+1$), making the procedure still efficient because $a$ is small. 
For example, consider a 5-qubit system and $a=2$. The two marked items are $\ket{11111}$ and $\ket{00000}$. Then
\[
    \Ub{o} = \left(
    I - 2E_1 \otimes E_1 \otimes E_1 \otimes E_1 \otimes E_1
    \right)\left(
    I - 2E_0 \otimes E_0 \otimes E_0 \otimes E_0 \otimes E_0
    \right)
\]
\[
= I - 2E_1 \otimes E_1 \otimes E_1 \otimes E_1 \otimes E_1 - 2E_0 \otimes E_0 \otimes E_0 \otimes E_0 \otimes E_0.
\]
If the input state to $\Ub{o}$ has rank $R$, the output from applying $\Ub{o}$ to the input has rank $(a+1)R = 3R$. The subsequent application of the gate $\Ub{d}$ increases the rank by at most a factor of two.  We outline the 
simulation of Grover's algorithm using rank reduction in Algorithm~\ref{alg:grover_approximate}.

We can use either a direct elimination of scalar multiples or CP-ALS to reduce the rank of an intermediate tensor. For Grover's algorithm, the direct elimination of scalar multiples is much more efficient. This is because the output that emerge from the direct application of $\Ub{g}$ contains several terms that are scalar multiples of each other. For the example system shown above, applying $\Ub{o}$ to a state that's in the space $S$ yields
\[\Ub{o}\left(\alpha \ket{00000} + \beta \ket{11111} + \gamma \ket{h}\right)\]\[
= \alpha(\ket{00000} - 2\ket{00000}) + \beta(\ket{11111} - 2\ket{11111}) + \gamma (\ket{h} - \frac{1}{2\sqrt{2}}\ket{00000} - \frac{1}{2\sqrt{2}}\ket{11111}).
\]
We can see that applying a direct elimination of scalar multiples can reduce the rank of the output tensor to 3. A similar reduction can be achieved after $\Ub{d}$ is applied.

\begin{algorithm}
    \caption{Simulating Grover's algorithm with rank reduction}
\label{alg:grover_approximate}
\begin{algorithmic}[1]
\small
\STATE{\textbf{Input: }Input state $\state{h}$ represented in CP format $\{\vcr{A}^{(1)}, \ldots, \vcr{A}^{(n)}\}$, maximum rank allowed $r_{\max}$
}
\FOR{iter $ = 1, 2, ..., \lfloor \frac{\pi}{4}\sqrt{N} \rfloor$}
\STATE  $\{ \mat{B}^{(1)}, \cdots , \mat{B}^{(n)} \}$ $\leftarrow$ Apply $\Ub{g}$ to the state represented by $\{ \mat{A}^{(1)}, \cdots , \mat{A}^{(n)} \}$ \label{line:grover-apply-ug}
\STATE $\{ \mat{A}^{(1)}, \cdots , \mat{A}^{(n)} \} \leftarrow \text{Rank\_Reduction}(\{ \mat{B}^{(1)}, \cdots , \mat{B}^{(n)} \}, r_{\max})$\label{line:extract_a}
\STATE{Normalize the state represented by $\{ \mat{A}^{(1)}, \cdots , \mat{A}^{(n)} \}$}
\ENDFOR
\RETURN $\{ \mat{A}^{(1)}, \cdots , \mat{A}^{(n)} \}$
\end{algorithmic}
\end{algorithm}

The analysis above assumes that $a$ is know in advance.
In cases where $a$ is unknown, it may not be easy to set the rank threshold. In Appendix~\ref{appendix:grover}, we also show that for any marked set that is small in size, we can approximate the intermediate states by rank-2 states, and the simulation can still yield one marked state will high probability after  $\bigo(\sqrt{N})$ steps of Grover's algorithm.

\section{Quantum Walks}
\label{sec:qwalk}

Quantum walks~\cite{mackay2002quantum,venegas2012quantum} play an important role in the development of many quantum algorithms, including quantum search algorithms~\cite{shenvi2003quantum} and the quantum page rank algorithm~\cite{paparo2012google}.
A quantum walk operator is the quantum extension of a classical random
walk operator that has been studied extensively in several scientific
disciplines. A classical random walk is characterized by an $N\times N$ 
Markov chain stochastic matrix $P$ associated a graph with $N$ vertices.
There is an edge from the $j$th vertex to the $i$th vertex if the $(i,j)$th element of 
$P$, denoted by $P_{ij}$, is nonzero. A random walk is a process in which 
a walker randomly moves from vertex $j$ to vertex $i$ with probability 
$P_{ij}$.  If $v$ is a vector that gives a probability distribution of the 
initial position of the walker, then $w = P^t v$ gives the probability distribution of the walker's position after $t$ steps of the random walk are taken.
One of the key results in the classical random walk theory
is that the standard deviation of the walker's position with respect to its mean position after taking $t$ steps is $\mathcal{O}(\sqrt{t})$. In 
contrast, the standard deviation is known to be $\mathcal{O}(t)$ in 
a corresponding quantum walk.

The simplest type of coined quantum walk on a one dimensional cyclic lattice can be 
represented by the following unitary matrix
\begin{equation}
U = (I\otimes H) (L\otimes E_1 + R \otimes E_2), 
\end{equation}
where $H,E_1$ and $E_2$ are defined in~\eqref{eq:gates},\eqref{eq:CU}, and $L$ and $R$ are left and right shift permutation matrices defined by
\begin{equation}
L = \left(
\begin{array}{ccccc}
0 & 0 & \hdots & 0 & 1\\
1 & 0 & \ddots & \ddots & 0 \\
0 & 1 & \ddots & \ddots  & \vdots \\
\vdots & \ddots & \ddots & \ddots & 0 \\
0 & \hdots & \hdots & 1 & 0 \\
\end{array}
\right), 
\ \
R = \left(
\begin{array}{ccccc}
0 & 1 & \hdots & \hdots & 0\\
0 & 0 & 1 & \ddots & 0 \\
\vdots & \ddots & \ddots & \ddots  & \vdots \\
\vdots & \ddots & \ddots & \ddots & 1 \\
1 & 0 & \hdots & \ddots & 0 \\
\end{array}
\right).
\end{equation}
The permutation matrices $L$ and $R$ correspond to the stochastic matrix $P$ in a classical random walk. The Hadamard matrix $H$ is known as a quantum coin
operator that introduces an additional degrees of freedom in determining how
the walker should move on the 1D lattice.

A quantum walk on a more general graph defined by a vertex set $V$ and 
edge set $E$ can be described by a formalism established by Szegedy~\cite{szegedy2004quantum,santha2008quantum}. Szegedy's quantum walk is defined
on the edges of the bipartite cover of the original graph $(V,E)$,
i.e. the graph is mapped to a Hilbert space $\mathcal{H}^{{|V|}^2} = \mathcal{H}^{|V|} \otimes \mathcal{H}^{|V|}$, with the orthonormal computational basis defined as follows:
\begin{equation}
\{
\state{x,y} := \state{x}\otimes\state{y} : x\in V, y\in V
\}.
\end{equation}
For each $x\in V$, $\state{\psi^{(x)}}$ is defined as the weighted superposition of the edges emanating from $x$,
\begin{equation}
\state{\psi^{(x)}} = \state{x} \otimes \sum_{y\in V} \sqrt{P_{yx}}\state{y} = \state{x} \otimes \state{\phi^{(x)}}.
\label{eq:superposition_edges}
\end{equation}
By making use of the SWAP operator $S$ and the reflection operator $\Ub{d}$ associated with $\{\state{\psi^{(x)}}: x \in V\}$, defined as 
\[S = \sum_{x,y\in V} \state{y,x}\stateconj{x,y},\]
\begin{equation}
\Ub{d} = 2 \sum_{x\in V} \state{\psi^{(x)}} \stateconj{\psi^{(x)}} - I
= \sum_{x\in V}\ket{x}\bra{x}\otimes \left(2\ket{\phi^{(x)}}\bra{\phi^{(x)}} - I\right),
\label{eq:qwalk_operator}
\end{equation}
we can define a quantum walk operator as
\begin{equation}
\Ub{w} = S\Ub{d}.
\label{eq:uw}
\end{equation}
A Szegedy's quantum walk can be used as a building block for searching a marked vertex in a graph. Let $x^*$ be the vertex to be searched. The oracle associated with $x^*$  is defined as
\begin{equation}
\Ub{o} = I - 2 \sum_{y\in V} \state{x^*, y} \stateconj{x^*,y}.
\label{eq:qwalk_oracle}
\end{equation}
Using such an oracle, we can perform the search by applying the following unitary~\cite{santos2016szegedy},
\begin{equation}
  \Ub{s} = \Ub{o}\Ub{w}\Ub{w},
\label{eq:qwalk_search} 
\end{equation}
to an initial state.

In the following, we will show
quantum circuits for implementing $\Ub{s}$ associated with 
a few Markov transition matrices induced by structured graphs. We 
also assume the transition
probability from vertex 
$x$ to vertex $y$ is defined
by
\begin{equation}
P_{yx} = \frac{A_{yx}}{\text{outdeg}(x)},
\label{eq:uniform_p}
\end{equation}
where $A$ is the adjacent matrix of the graph and $\mathrm{outdeg}(x)$ is the number of edges emanating from $x$.
We will show that for some structured graphs, the quantum search can be simulated efficiently if the initial state is rank-1.

\subsection{Quantum Walk on a Complete Graph with Self-loops}
\label{sec:qwalk_complete_loop}
We first consider a complete graph with $|V|=N=2^n$ vertices. We assume that a random step from a vertex can return to the vertex itself, i.e., the graph contains self-loops.  The transition probability (Markov chain) matrix $P$ associated with such a random walk can be expressed as  \begin{equation}
P = \frac{1}{N} e e ^T,
\end{equation}
where $e$ is a vector of all ones. In this case, the vector $\state{\phi^{(x)}}$ defined in \eqref{eq:superposition_edges} is the same for all $x\in V$, and can be set to $\state{h} = \frac{1}{\sqrt{N}}\sum_{x\in V}\state{x}$. As a result, the diffusion operator in~\eqref{eq:qwalk_operator} can be simplified to
\begin{equation}
\Ub{d} = 2 \sum_{x\in V} \state{x} \stateconj{x} \otimes \state{h} \stateconj{h}  - I
= I \otimes \left(2\state{h} \stateconj{h} - I\right).
\label{eq:diff_complete_loop}
\end{equation}
Likewise, the oracle operator in~\eqref{eq:qwalk_oracle} can be simplified to
\begin{equation}
    \Ub{o} = I - 2 \state{x^*} \stateconj{x^*}\otimes (\sum_{y\in V} \state{y} \stateconj{y}) = (I - 2 \state{x^*} \stateconj{x^*})\otimes I.
\end{equation}
With these simplified expression for $\Ub{d}$ and $\Ub{o}$, the search operator defined  in~\eqref{eq:qwalk_search} has the form 
\begin{align}
\Ub{s} &= \bigg((I - 2 \state{x^*} \stateconj{x^*})\otimes I\bigg)\bigg((2\state{h} \stateconj{h} - I) \otimes I \bigg)\bigg(I \otimes (2\state{h} \stateconj{h} - I)\bigg) \nonumber \\
&= 
\bigg((I - 2 \state{x^*} \stateconj{x^*})(2\state{h} \stateconj{h} - I)\bigg)\otimes \bigg(2\state{h} \stateconj{h} - I\bigg).
\end{align}
The circuit implementation of $\Ub{s}$ is similar to the implementation for the Grover's algorithm, since the operator on the first $n$ qubits is $\Ub{g}$ (expressed in \eqref{eq:ug} and shown in Figure~\ref{fig:grover}), and the operator on the last $n$ qubits is \eqref{eq:Ud}. It is easy to verify that, when the $\Ub{s}$ is applied to the rank-1 tensor $\state{h}\state{h}$, it produces a tensor with a rank that is at most 2. As a result, the search for $x^*$ can be simulated on a classical computer with $\bigo(n)$ complexity in flops and memory. 

\subsection{Quantum Walk on a Complete Bipartite Graph}
\label{subsec:qwalk-bipartite}
\begin{figure}[]
  \centering
  \[
\Qcircuit @C=.9em @R=.9em {
& & \mbox{$\Ub{d_1}$} & & & & & \mbox{$\Ub{d_2}$} & & & \\
\lstick{q_1} & \ctrlo{6} & \ctrlo{5} & \ctrlo{6} & \qw & \ctrl{6} & \ctrl{5} & \ctrl{5} & \ctrl{5} & \ctrl{6} & \qw & \qw & \qw & \qw & \qswap & \qw\\
\lstick{q_2} & \qw & \qw & \qw & \qw & \qw & \qw & \qw & \qw & \qw & \qw & \qw & \qw & \qswap & \qw  & \qw\\
\lstick{q_{3}} & \qw & \qw & \qw & \qw & \qw & \qw & \qw & \qw & \qw & \qw & \qw & \qswap & \qw & \qw & \qw\\
\lstick{q_{4}} & \qw & \qw & \qw & \qw & \qw & \qw & \qw & \qw & \qw & \qw & \qswap & \qw & \qw & \qw & \qw\\
& & & \\
\lstick{q_{5}} & \qw  & \gate{Z} & \qw & \qw & \qw & \gate{X} & \gate{Z} & \gate{X} & \qw & \qw & \qw & \qw & \qw & \qswap \qwx[-5] & \qw\\
\lstick{q_{6}}  & \sgate{H}{1} &\ctrlo{-1} & \sgate{H}{1} & \qw & \sgate{H}{1}  & \qw & \ctrlo{-1} & \qw & \sgate{H}{1} & \qw & \qw & \qw & \qswap \qwx[-5] & \qw & \qw\\
\lstick{q_{7}}  & \sgate{H}{1} & \ctrlo{-1} & \sgate{H}{1} & \qw & \sgate{H}{1} & \qw & \ctrlo{-1} & \qw & \sgate{H}{1} & \qw & \qw & \qswap \qwx[-5] & \qw & \qw & \qw\\
\lstick{q_{8}}  & \gate{H} & \ctrlo{-1} & \gate{H} & \qw & \gate{H} & \qw & \ctrlo{-1} & \qw & \gate{H} & \qw & \qswap \qwx[-5] & \qw & \qw & \qw & \qw
\gategroup{2}{2}{10}{4}{.7em}{--}
\gategroup{2}{6}{10}{10}{.7em}{--}
}
  \]  
  \caption{Circuit implementation of the operator $\Ub{d}$ for a quantum walk on a complete bipartite graph. The implementation assumes that $n_1=n_2=3$.}
  \label{fig:qwalk_bipartite}
\end{figure}
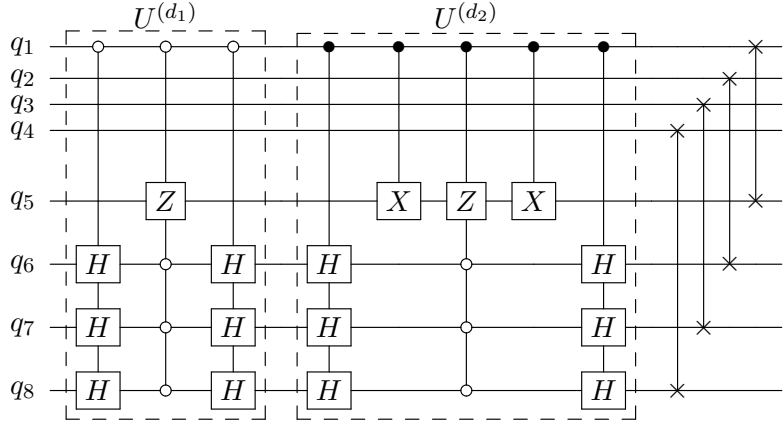
We now consider a quantum walk on a complete bipartite graph $K_{N_1, N_2}$, where the set of vertices $V$ consists of two subsets $V_1$ and $V_2$, where $N_1 = |V_1| = 2^{n_1}, N_2 = |V_2|=2^{n_2}$. Each vertex in $V_1$ is connected to all vertices in $V_2$ and vice versa. In this case vector $\state{\phi^{(x)}}$ is defined by  $\state{h^{(V_1)}} = \frac{1}{\sqrt{N_2}}\sum_{x\in V_2} \state{x}$,  for $x\in V_1$, and by $ \state{h^{(V_2)}} = \frac{1}{\sqrt{N_1}}\sum_{x\in V_1} \state{x}$ for $x\in V_2$.
The diffusion operator can be rewritten as 
\begin{equation}
\Ub{d} = \sum_{x\in V_1} \state{x} \stateconj{x} \otimes \left(2 \state{h^{(V_1)}} \stateconj{h^{(V_1)}}-I\right) + \sum_{x\in V_2} \state{x} \stateconj{x} \otimes \left( 2\state{h^{(V_2)}} \stateconj{h^{(V_2)}} - I\right).
\label{eq:diff_bipartite}
\end{equation}
If we start from a rank-1 quantum state, all intermediate quantum states in a quantum walk based search on a complete bipartite graphs can be low-rank. Below we provide analysis for $n_1 = n_2$, and the analysis can be generalized to the cases where $n_1\neq n_2$. 

We can map all $2\cdot 2^{n_1}$ vertices to quantum states described by $n_1 + 1$ qubits. The quantum circuit for the corresponding quantum walk operates on $2n+2$ qubits. Each vertex $x\in V_1$ can be represented by $\ket{0}\ket{x}$, and each vertex $x\in V_2$ can be represented by $\ket{1}\ket{x}$. Using this representation and the definition $\ket{h} \equiv H\ket{0}$, we obtain
\begin{equation}\label{eq:rewrite_bipartite_h}
 \state{h^{(V_1)}} = \ket{1}\ket{h}^{\otimes n_1}, 
 \quad \text{and} \quad 
 \state{h^{(V_2)}} = \ket{0} \ket{h}^{\otimes n_1},
\end{equation}
\begin{equation}\label{eq:rewrite_bipartite_vertices}
 \sum_{x\in V_1} \ket{x} \bra{x} = \ket{0}\bra{0} \otimes I^{\otimes n_1}, \quad \text{and} \quad \sum_{x\in V_2} \ket{x} \bra{x} = \ket{1}\bra{1} \otimes I^{\otimes n_1}.   
\end{equation}
Assuming that $x^* \in V_1$, the oracle operator in~\eqref{eq:qwalk_oracle} can be simplified to
\begin{equation}
    \Ub{o} = I - 2 \ket{0}\bra{0}\state{x^*} \stateconj{x^*}\otimes (\sum_{y\in V_2} \state{y} \stateconj{y}) = I - 2 \ket{0}\bra{0} \otimes \state{x^*} \stateconj{x^*}\otimes \ket{1}\bra{1} \otimes I^{\otimes n_1}.   
\end{equation}
Based on \eqref{eq:rewrite_bipartite_vertices}, $\Ub{d}$ can be rewritten as
\begin{align*}
 \Ub{d} &= \underbrace{\ket{0}\bra{0} \otimes I^{\otimes n_1} \otimes\left(2 \ket{1}\bra{1} \otimes \ket{h}^{\otimes n_1}\bra{h}^{\otimes n_1} - I^{\otimes(n_1+1)}\right)}_{\Ub{d_1}} \\
 &+  \underbrace{\ket{1}\bra{1} \otimes I^{\otimes n_1} \otimes 
 \left( 2 \ket{0}\bra{0} \otimes \ket{h}^{\otimes n_1}\bra{h}^{\otimes n_1}- I^{\otimes(n_1+1)}
 \right)}_{\Ub{d_2}}.
\end{align*}
As a result, we can rewrite $\Ub{d_1}$ and $\Ub{d_2}$ as
\begin{equation}
    \Ub{d_1} = \ket{0}\bra{0} \otimes I^{\otimes n_1} \otimes
    \left[
    \left(I \otimes H^{\otimes n_1}\right)
    \left(2 \ket{1}\bra{1} \otimes \ket{0}^{\otimes n_1}\bra{0}^{\otimes n_1} - I^{\otimes(n_1+1)}\right)
    \left(I \otimes H^{\otimes n_1}\right)
    \right],
\end{equation}
\begin{equation}
    \Ub{d_2} = \ket{1}\bra{1} \otimes I^{\otimes n_1} \otimes
    \left[
    \left(I \otimes H^{\otimes n_1}\right)
    \left(2 \otimes \ket{0}^{\otimes (n_1+1)}\bra{0}^{\otimes (n_1+1)} - I^{\otimes(n_1+1)}\right)
    \left(I \otimes H^{\otimes n_1}\right)
    \right].
\end{equation}
The circuit implementations of these operators are shown in Figure~\ref{fig:qwalk_bipartite}.

The input state to the quantum walk based search algorithm is the superposition of all edge states,
\[
\frac{1}{\sqrt{2|V_1|}} \sum_{x\in V_1} \ket{x} \ket{\phi^{(x)}} + 
\frac{1}{\sqrt{2|V_2|}} \sum_{x\in V_2} \ket{x} \ket{\phi^{(x)}}
= 
\frac{1}{\sqrt{2}}\ket{0}\ket{h}^{\otimes n_1}\ket{1}\ket{h}^{\otimes n_1} +
\frac{1}{\sqrt{2}}\ket{1}\ket{h}^{\otimes n_1}\ket{0}\ket{h}^{\otimes n_1}.
\]
Applying the search operator $\Ub{s} = \Ub{o}\Ub{w}\Ub{w} = \Ub{o}S\Ub{d}S\Ub{d}$ amounts to applying $\Ub{d}$, $S\Ub{d}S$ and $\Ub{o}$ successively to the input state. Similar to the analysis for Grover's algorithm, it can be verified that the intermediate states emerging from the application of these unitaries implemented in quantum circuit are in the space spanned by
\[
\left\{ \ket{0}\ket{h}^{\otimes n_1}\ket{1}\ket{h}^{\otimes n_1},
\ket{1}\ket{h}^{\otimes n_1}\ket{0}\ket{h}^{\otimes n_1}, 
\ket{0}\ket{x^*}\ket{1}\ket{h}^{\otimes n_1},
\ket{1}\ket{h}^{\otimes n_1}\ket{0}\ket{x^*}
\right\}.
\]
The CP rank of these intermediate states is bounded by 4.  As a result, both the overall memory and computational costs are bounded by $\bigo(n)$.

\subsection{Quantum Walk on Cyclic Graphs}

\begin{figure}
  \centering
  \[
\Qcircuit @C=.9em @R=.9em {
\lstick{q_{1}} & \multigate{4}{R} & \qw & & & \targ & \qw & \qw & \qw & \qw & \qw & & & & & &
\lstick{q_{1}} & \multigate{4}{L} & \qw & & & \targ & \qw & \qw & \qw & \qw & \qw
\\
\lstick{q_{2}} & \ghost{R} & \qw & & & \ctrl{-1} & \targ & \qw & \qw & \qw & \qw & & & & &&
\lstick{q_{2}} &\ghost{L} & \qw & & & \ctrlo{-1} & \targ & \qw & \qw & \qw & \qw
\\
\lstick{\raisebox{.5em}{\vdots}} & & & = & & &&& \lstick{\raisebox{-.5em}{$\cdots$}} & & & && & &&
\lstick{\raisebox{.5em}{\vdots}} & & & = & & &&& \lstick{\raisebox{-.5em}{$\cdots$}} & &
\\
\lstick{q_{n-1}}  & \ghost{R} & \qw & & & \ctrl{-2} & \ctrl{-2} & \qw & \targ & \qw & \qw & &&&&&
\lstick{q_{n-1}}  &\ghost{L} & \qw & & & \ctrlo{-2} & \ctrlo{-2} & \qw & \targ & \qw & \qw
\\
\lstick{q_{n}} & \ghost{R} & \qw & & & \ctrl{-1} & \ctrl{-1} & \qw & \ctrl{-1} & \targ & \qw  & &&&&&
\lstick{q_{n}} &\ghost{L} & \qw & & & \ctrlo{-1} & \ctrlo{-1} & \qw & \ctrlo{-1} & \targ & \qw
}
\]
  \caption{Circuit implementation for the one-element rotation operators~\cite{douglas2009efficient}.}
  \label{fig:shift}
\end{figure}
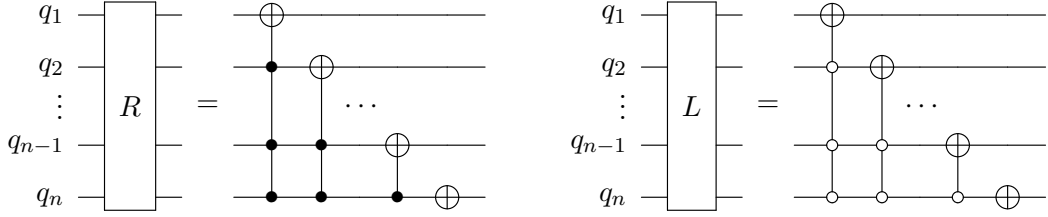

\begin{figure}
  \centering
  \[
\Qcircuit @C=.9em @R=.9em {
\lstick{q_{1}} & \multigate{4}{K^{(b)}} & \qw & & & \gate{R_y(\theta_1)} & \ctrl{1} & \ctrlo{1} & \ctrlo{1} & \qw & \ctrlo{1} & \ctrlo{1} & \ctrlo{1} & \qw
\\
\lstick{q_{2}} & \ghost{K^{(b)}} & \qw & & & \qw & \sgate{H}{2} & \gate{R_y(\theta_2)} & \ctrl{2} & \qw & \ctrlo{2} & \ctrlo{2} & \ctrlo{2} & \qw
\\
\lstick{\raisebox{.5em}{\vdots}} & & & = & & & & & & &\lstick{\raisebox{-.5em}{$\cdots$}} & & 
\\
\lstick{q_{n-1}}  & \ghost{K^{(b)}} & \qw & & & \qw & \sgate{H}{1} & \qw & \sgate{H}{1} & \qw & \gate{R_y(\theta_{n-1})} & \ctrl{1} & \ctrlo{1} & \qw
\\
\lstick{q_{n}} & \ghost{K^{(b)}} & \qw & & & \qw & \gate{H} & \qw & \gate{H} & \qw & \qw & \gate{H} & \targ & \qw
}
\]
  \caption{Circuit implementation for $K^{(b)}$ for the complete graph
without self-loops. $\theta_i$ is defined as  $\theta_i = \arccos{\sqrt{\frac{2^{n-i}-1}{2^{n-i+1}-1}}}$.}
  \label{fig:Kb}
\end{figure}
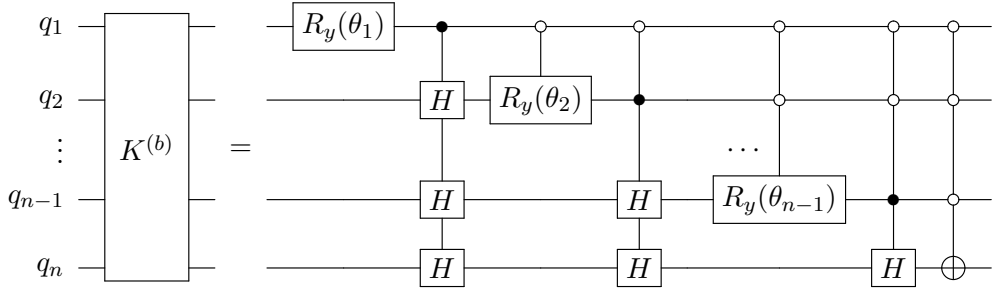

The last type of quantum walk we examine is performed on a cyclic graph. The transition probability matrix associated with a cyclic graph is a circulant matrix. 
We focus on a subset of cyclic graphs in which each vertex is connected to $N-a$ other vertices, where $N=2^n$ and $a = 2^m - 1$ for some natural number $m<n$. The matrix elements of the transition matrix $P$ are defined as
\begin{equation}
P_{yx} = \left\{ \begin{array}{cc}
\frac{1}{\sqrt{N-a}}  & \mbox{if} \ \ (y-x)\text{ mod }{N}\geq a,  \\
0  & \mbox{otherwise}.
\end{array}
\right.
\label{eq:p_circulant}
\end{equation}
When $a=1$, the cyclic graph becomes a complete graph.
To simplify the notation, we use $\state{x}$ to denote the state $\state{x \text{ mod } N}$. We also let $\state{v^{(x)}}$ be the superposition of vertices that are not adjacent to $x$,
\begin{equation}
    \state{v^{(x)}} = \frac{1}{\sqrt{a}}\sum_{i=0}^{a-1} \state{x + i}.
\end{equation}
It follows that the vector  $\state{\psi^{(x)}}$ defined in \eqref{eq:superposition_edges} can be rewritten as
\begin{equation}
\state{\psi^{(x)}} = \state{x} \bigg(
\sqrt{\frac{N}{N-a}}\state{h} - \sqrt{\frac{a}{N-a}}\state{v^{(x)}}
\bigg).
\end{equation}
Consequently, the diffusion operator $\Ub{d}$ has the form
\begin{equation}
\Ub{d} = 
(2\frac{N}{N-a}I\otimes \state{h}\stateconj{h} - I )+ 
\sum_{x\in V}\frac{2a}{N-a}\state{x} \stateconj{x}\otimes \bigg(-\sqrt{\frac{N}{a}}\big(\state{v^{(x)}}\stateconj{h} + \state{h}\stateconj{v^{(x)}}\big)
+\state{v^{(x)}} \stateconj{v^{(x)}}
\bigg) 
.
\label{eq:diffuser_circulant}
\end{equation}
When $a\ll N$, $\Ub{d}$ is dominated by the first term, making it behave like a diffusion operator that appears in the Grover's algorithm discussed in section~\ref{sec:grover}.
However, the presence of the second term can introduce entanglement in the intermediate states in a quantum circuit implementation of $\Ub{d}$. Our experimental results in Section~\ref{subsec:exp-qwalk} shows that low-rank approximation for quantum walk based search on cyclic graphs is not accurate.

\begin{figure}[]
  \centering
  \[
\Qcircuit @C=.9em @R=.9em {
& & & \mbox{$\Ub{t}$} & & & & \mbox{$D$} & & & & \mbox{$\Ub{t}{}^{+}$}\\
\lstick{q_1} & \qw & \qw & \qw & \ctrl{6} & \qw & \qw & \qw & \qw & \qw & \qw & \qw & \qw & \ctrl{6} & \qw & \qw & \qw & \qw & \qw & \qswap & \qw\\
\lstick{q_2} & \qw & \qw & \ctrl{5} & \qw & \qw & \qw & \qw & \qw & \qw & \qw & \qw & \ctrl{5} & \qw & \qw & \qw & \qw & \qw &\qswap & \qw & \qw\\
\lstick{\raisebox{.5em}{\vdots}} & & & &  & & & & & & & & & & & &  \\
\lstick{q_{n-1}}  & \qw & \ctrl{3} & \qw & \qw & \qw & \qw & \qw & \qw & \qw & \qw & \ctrl{3} & \qw & \qw & \qw & \qswap & \qw & \qw & \qw & \qw & \qw\\
\lstick{q_n} & \ctrl{2} & \qw & \qw  & \qw & \qw & \qw & \qw & \qw & \qw & \ctrl{2} & \qw & \qw  & \qw & \qswap & \qw & \qw & \qw  & \qw & \qw & \qw\\
& & & \lstick{\raisebox{-.5em}{$\cdots$}} & & & & & & & & & \lstick{\raisebox{-.5em}{$\cdots$}} & & & & & & \lstick{\raisebox{-.5em}{$\cdots$}} & & \\
\lstick{q_{n+1}} & \multigate{4}{L} & \multigate{3}{L} & \multigate{1}{L} & \targ & \multigate{4}{K^{(b)+}} & \qw & \ctrlo{1} & \qw & \multigate{4}{K^{(b)}} &\multigate{4}{R} & \multigate{3}{R} & \multigate{1}{R} & \targ & \qw & \qw & \qw & \qw & \qw & \qswap \qwx[-6] & \qw\\
\lstick{q_{n+2}} & \ghost{L} & \ghost{L} & \ghost{L} & \qw & \ghost{K^{(b)+}} & \qw & \ctrlo{2} & \qw & \ghost{K^{(b)}} & \ghost{R} & \ghost{R} & \ghost{R} & \qw & \qw & \qw & \qw & \qw & \qswap \qwx[-6] & \qw & \qw\\
\lstick{\raisebox{.5em}{\vdots}} & & & &  &  & & & & & & & &\\
\lstick{q_{2n-1}}  & \ghost{L} & \ghost{L} & \qw & \qw & \ghost{K^{(b)+}} & \qw & \ctrlo{1} & \qw & \ghost{K^{(b)}} & \ghost{R} & \ghost{R} & \qw & \qw & \qw & \qswap \qwx[-6] & \qw & \qw & \qw & \qw & \qw\\
\lstick{q_{2n}} & \ghost{L} & \qw & \qw & \qw  & \ghost{K^{(b)+}} & \gate{X} & \gate{Z} & \gate{X} & \ghost{K^{(b)}} & \ghost{R} & \qw & \qw & \qw & \qswap \qwx[-6] & \qw & \qw & \qw & \qw & \qw & \qw
\gategroup{2}{2}{12}{6}{.7em}{--}
\gategroup{2}{7}{12}{9}{.7em}{--}
\gategroup{2}{10}{12}{14}{.7em}{--}
}
  \]  
  \caption{Circuit implementation for the quantum walk $\Ub{w}$ on cyclic graphs.}
  \label{fig:qwalk_circulant}
\end{figure}
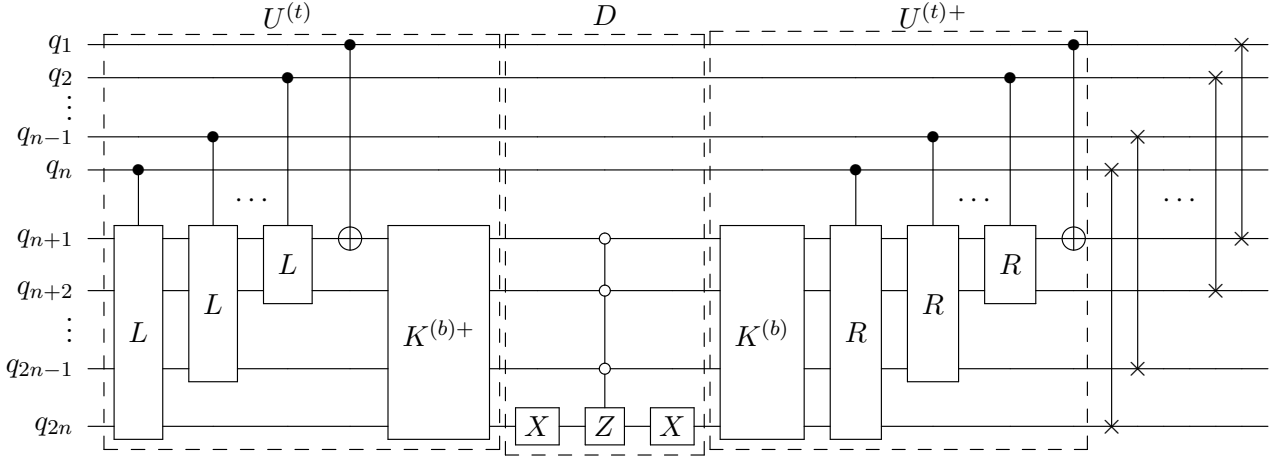

To implement $\Ub{d}$ as a quantum circuit, we use the technique presented  in~\cite{loke2017efficient} to construct a unitary operator $\Ub{t}$ that diagonalizes $\Ub{d}$.
It has been shown that, 
for cyclic graphs with a circulant $P$, we can construct a unitary $\Ub{t}$ so that
\begin{equation}
\Ub{t}  \Ub{d}  \Ub{t}{}^{+}  =  I \otimes \left(2\state{b}\stateconj{b}  - I\right), 
\label{eq:Ud_circulant_rewrite}
\end{equation}
for some computational basis $\ket{b}$. The diagonal unitary $D=2\state{b}\stateconj{b}  - I$ can be efficiently implemented by a quantum circuit. 
The unitary $\Ub{t}$ can be written as a product of shift permutation matrices $L$ of different sizes, tensor product with identities, as well a unitary $K^{(b)}$ that maps the first column of $P$ to $\ket{b}$. Shift permutation can be implemented efficiently by 
circuits shown in Figure~\ref{fig:shift}. When each column of $P$ is sparse or structured, $K^{(b)}$ can also be implemented by an efficient quantum circuit (for the complete graph
without self-loops, the implementation is shown in Figure~\ref{fig:Kb}). As a result, $\Ub{d} = \Ub{t}{}^+ D \Ub{t}$ can be implemented by an efficient circuit shown in Figure~\ref{fig:qwalk_circulant}.

To summarize, quantum walk based search can be accurately low-rank simulated only on specific structure graphs, including complete graphs with self-loops, and complete bipartite graphs. For general graphs, such as cyclic graphs, the low-rank simulation will not be accurate.


\section{Summary of Computational Cost}
\label{sec:compare}
The use of low rank CP decomposition to represent the input and intermediate states in the simulation of a quantum algorithm allows us to significantly reduce the memory requirement of the simulation. If the rank of all intermediate states can be bounded by a small constant, then the memory requirement of the simulation is linear with respect to the number of qubits $n$. This is significantly less than the memory required to simulate a quantum algorithm directly, which is exponential with respect to $n$, by performing a matrix-vector multiplication.

By keeping the input and intermediate states in low rank CP form, we can also significantly reduce the number of floating point operations (FLOPs) in the simulation. Because a quantum gate in each layer of the quantum circuit is typically local, meaning that it is a $2\times 2$ matrix operating on one factor of a CP term, the number of FLOPs required to multiply quantum gates with the input states is proportional to $nrD$, where $r$ is the maximum rank of all intermediate states and $D$ is the depth of the quantum circuit. Therefore, as long as $D$ and $r$ are not too large, the cost of applying a unitary transformation in decomposed form (i.e. a quantum circuit) to a low rank input is relatively low also.  However, to keep intermediate states in low rank CP decompositions, rank reduction through CP-ALS or direct CPD may need to be used. The cost of the rank reduction computation can be higher than applying the unitary transformation associated with the quantum algorithm.

We summarize the computational costs of simulating the quantum algorithms analyzed above using CP decomposition in Table~\ref{table:cost}. The cost of Direct CPD, which uses a direct elimination of scalar multiples (Algorithm~\ref{alg:redmultiple}) to lower the rank of intermediate tensors in the simulation, is lower than CP-ALS (Algorithm~\ref{alg:cp_als}).  However, 
as can be seen in the table, Direct CPD cannot always be effectively applied to the simulation of a quantum algorithm in which an intermediate tensor does not contain CP terms that are multiple of each other. The CP-ALS is a more general method to compress  intermediate tensors emerging from successive layers of a quantum circuit.  It can be used for the simulation of all quantum algorithms. 
When both direct CPD and CP-ALS can be applied, the cost of CP-ALS is typically  much higher even when the number of ALS iterations is fixed at a small constant. Therefore, whenever possible, we should try using direct CPD first before using CP-ALS.

\begin{table}[!htb]
\caption{The asymptotic computational cost of simulating different quantum algorithms with CP low-rank approximation.
The computational cost shown for Grover's algorithm is the cost of applying each $\Ub{g}$ defined in \eqref{eq:ug}, and for quantum walks is the cost of applying each $\Ub{w}$ defined in \eqref{eq:uw}. } 
 \label{table:cost}
\centering
\begin{adjustbox}{width=0.9\textwidth}
\begin{tabular}{|c|c|c|}
  \hline
  Algorithm
 & Direct CPD 
 & CP-ALS
 \\ 
   \hline
QFT w/ standard basis input state & $\bigo(n^2)$ & $\bigo(n^3)$ \\
   \hline
QFT w/ random rank-1 input state and CP rank limit $r$ & / & $\bigo(r^2n^3 + r^3n^2)$ \\
  \hline
Phase estimation w/ CP rank limit $r$ & / & $\bigo(r^2n^3
+ r^3n^2)$ \\ 
  \hline
 Grover's Algorithm w/ $a$ marked items  & $\bigo(a^3n)$ & $\bigo(a^3n^2)$ \\
  \hline
Quantum walks based search w/ complete graph with loops & $\bigo(n)$ & $\bigo(n^2)$ \\ 
  \hline
Quantum walks based search w/ complete bipartite graph & $\bigo(n)$ & $\bigo(n^2)$  \\ 
  \hline
\end{tabular}
\end{adjustbox}
\end{table}

\section{Experimental Results}
\label{sec:exp}

In this section, 
we demonstrate the efficacy of using low rank approximation in the simulation of 
four quantum algorithms: QFT, phase estimation, Grover's algorithm and quantum walks. 
We implemented our algorithms on top of an open-source Python library, ``Koala"~\cite{pang2020efficient}, which is a quantum circuit/state simulator and provides interface to several numerical libraries, including NumPy~\cite{oliphant2006guide} for CPU executions and CuPy~\cite{Okuta2017CuPyA} for GPU executions. All of our code is available at \url{https://github.com/LinjianMa/koala}. Most of our experiments were carried out on an Intel Core i7 2.9 GHz Quad-Core machine using NumPy routines. For some QFT simulation experiments with large number of qubits and large CP rank limits, the experiments were carried out on an NVIDIA Titan X GPU using CuPy routines to accelerate the execution.

In each of these simulations, the input to the simulated quantum circuit is chosen to be a rank-1 or low rank state in the CP representation.  We limit the CP rank of the intermediate states to a fixed integer that varies from one algorithm to another. This limit may also change with respect to the number of simulated qubits employed in the quantum algorithm. Typically, a higher limit is required for simulations that employ more qubits.

We measure the fidelity of the simulation output using the metric defined in Section~\ref{subsec:fidel}. 
To improve the accuracy of the CP-ALS rank reduction procedure, we repeat the procedure several times using different initial guesses of the approximate low-rank CP tensors. The approximation that yields the highest fidelity is chosen as the input to the next stage of the simulation.

\subsection{Quantum Fourier Transform}
\label{subsec:exp-qft}
We first simulate the QFT algorithm when the input is a standard basis. As is discussed in Theorem~\ref{thm: dft_basis}, all the intermediate states can be represented by a rank-1 state. The results are shown in in Table~\ref{table:qft-standard}. As can be seen, we can accurately simulate the large circuits with as many as 60 qubits.

\begin{table}[!htb]
\caption{Numerical experimental results for QFT with direct CPD when the input is a standard basis.  The rank limit ($r$) is set as 1 for all the experiments. } 
\label{table:qft-standard}
\centering
\begin{adjustbox}{width=0.78\textwidth}
\begin{tabular}{|c|c|c|c|c|c|c|c|c|c|c|}
  \hline
 Number of qubits
 & 18
 & 20
 & 22
 & 24
 & 26
 & 28
 & 30
 & 32
 & 40
 & 60
 \\ 
  \hline
Fidelity estimation \eqref{eq:fidel_est}  & 1.0 & 1.0 & 1.0 & 1.0 & 1.0 & 1.0 & 1.0 & 1.0 & 1.0 & 1.0\\ 
  \hline
\end{tabular}
\end{adjustbox}
\end{table} 

We also simulate the QFT algorithm in which the input to the QFT circuit is a randomly generated rank-1 state in the CP representation.
Each element of each CP factor is drawn from a uniformly distributed random number between 0 and 1.
As we explained in Section~\ref{subsec:qft}, even though the input state to the QFT circuit has rank 1, the output of the circuit may not be low rank. Furthermore, the rank of the intermediate states resulting from the application of a sequence of one or two-qubit gates may increase rapidly if no rank reduction procedure is applied.

We simulate the QFT algorithm performed on circuits with at least 16 qubits and at most 40 qubits, with the results presented in Table~\ref{table:qft-state}. The experiments with 40 qubits are run on one GPU, and all the other experiments are run on a CPU. These simulations correspond to Fourier transforms of vectors of dimensions between $2^{16}=65,536$ and $2^{40} \approx 1.1\times 10^{12}$.
For all the experiments, we use CP-ALS to reduce the rank of the intermediate states when the CP rank of the state exceeds the limit $r$ reported in the 
second row of Table~\ref{table:qft-state}. 
Three different random initial guesses are used in each attempt to reduce the rank of the intermediate state. 

We report the approximation fidelity of the simulation output in the third row of Table~\ref{table:qft-state}. 
As can be seen from this table, fidelity beyond 90\% can be achieved for 16 to 24-qubit QFT simulations when the CP rank of all intermediate states are limited to 256.  As the number of qubits  increase to 26, 27 or 28, the fidelity of the output drops below 90\%. Higher CP ranks are necessary to maintain high fidelity. The largest system we have tested has 40 qubits. Even when we increase the limit of the CP rank to 2048, we can only achieve 58\% fidelity. We didn't further increase the rank limit, since the simulation time will be too long (the experiment with rank limit being 2048 took around 10 hours to finish).

\begin{table}[!htb]
\caption{Numerical experimental results for QFT with CP-ALS when the input is a random state with CP rank 1. } 
\label{table:qft-state}
\centering
\begin{adjustbox}{width=0.8\textwidth}
\begin{tabular}{|c|c|c|c|c|c|c|c|c|}
  \hline
 Number of qubits
 & 16
 & 20
 & 24
 & 26
 & 27
 & 28
 & 40
 & 40
 \\ 
  \hline
Rank limit ($r$) & 256 & 256 & 256 & 256 & 256 & 256 & 1024 & 2048 \\ 
  \hline
Fidelity estimation \eqref{eq:fidel_est} & 0.998 & 0.975 & 0.918 & 0.784 & 0.845 & 0.788 & 0.534 & 0.580 \\ 
  \hline
\end{tabular}
\end{adjustbox}
\end{table} 

\subsection{Phase Estimation}
\label{subsec:exp-phase}

We next simulate phase estimation by 
constructing a rank-1 state according
to \eqref{eq:qftinv_input} as the input to an inverse QFT quantum circuit. 
We set the phase angle $\theta$ in \eqref{eq:qftinv_input} to  $\theta=1/2(1+1/2^n)$. It follows from \eqref{eq:phaseout-rewrite} that this particular choice of $\theta$ results in an output state that is not simply an elementary basis. According to Theorem~\ref{thm: phase}, the algorithm can still be efficiently low-rank approximated.

The CP rank of the intermediate state resulting from the application of a two-qubit gate doubles. 
We use CP-ALS to reduce the rank of the state when the rank becomes larger than the limit of 20. Three random initial guesses are used in each CP-ALS rank reduction step, and the best approximation is used to continue the simulation.

The fidelity of the simulation is shown in Table~\ref{table:phase}. 
Because all intermediate states can be well approximated by rank-20 states,
we are able to simulate large circuits with as many as 60 qubits. 
As can be seen from the table, high fidelity ($>0.999$) can be achieved for all simulations. 

\begin{table}[!htb]
\caption{The fidelity of phase estimation simulation.} 
\label{table:phase}
\centering
\begin{adjustbox}{width=0.99\textwidth}
\begin{tabular}{|c|c|c|c|c|c|c|c|c|c|c|}
  \hline
 Number of qubits
 & 18
 & 20
 & 22
 & 24
 & 26
 & 28
 & 30
 & 32
 & 40
 & 60
 \\ 
  \hline
Rank limit ($r$)  & 20 & 20 & 20 & 20 & 20 & 20 & 20 & 20 & 20 & 20\\ 
  \hline
Fidelity estimation \eqref{eq:fidel_est}  & 1.0 & 0.9997 & 0.9998 & 0.9998 & 0.9995 & 0.9993 & 0.9993 & 0.9997 & 0.9972 & 0.9994\\ 
  \hline
\end{tabular}
\end{adjustbox}
\end{table} 

\subsection{Grover's Algorithm}
\label{subsec:exp-grover}

In Section~\ref{sec:grover}, we showed that Grover's algorithm is intrinsically low-rank, i.e., when the input to the Grover's quantum circuit is a particular rank-1 state, all intermediate states produced at each layer of the circuit have low ranks. 
Therefore, we should be able to simulate the algorithm by keeping the rank of all intermediate states in the simulation low using CP-ALS.

However, in practice, the use of CP-ALS to reduce the rank of an intermediate state can be difficult for large circuits that contain many qubits. This difficulty arises from the large disparity between the magnitudes of different CP components in the intermediate state produced in the early iterations of the Grover's algorithm. To be specific, the intermediate state produced by the first iteration of the Grover's algorithm can be expressed as $\alpha \state{h} + \beta\state{x^*}$, where the coefficient $\beta$ has a magnitude close to $1/\sqrt{N}$ or $1/\sqrt{2^n}$, whereas $\alpha \sim O(1)$.  The amplitude of $\beta$ decreases exponentially with respect to the number of qubits. 
For a large $n$, we found that the CP-ALS output were in the direction of  $\state{h}$ for most of the random initial guesses, and the amplitude of $\state{x^*}$ were not effectively amplified.
As a result, it is difficult for CP-ALS to numerically identify the $x^*$ component in the intermediate state.

Table~\ref{table:grover} shows the feasibility of using CP-ALS in the simulation of Grover's algorithm. We use Grover's circuits that encode one marked item to be searched ($|A|=1$) as well as circuits that encode 20 marked items to be searched ($|A|=20$). 
For all the experiments, we set the CP rank limit to 2.  
In each experiment, we use multiple random initial guesses in the first iteration of the Grover's algorithm to produce a low rank approximation. The number of initial guesses are listed in the row labelled by Num-ALS-init.
The use of more initial guesses can yield a more accurate low rank approximation. In subsequent iterations of Grover's algorithm, we use the CP-ALS approximation produced in the previous iterations as the initial guess. In exact arithmetic, this scheme guarantees that all intermediate tensors produced in the simulated Grover's algorithm lie within the subspace spanned by $\state{h}$ and $\state{x^*}$, and the amplitude of $\state{x^*}$ is amplified incrementally. 

The accuracy of the simulation is measured by the amplitude sum of the marked items in the final output state. This can be computed as  
$\sum_{x\in A} \left|\stateconj{\psi} x\rangle\right|^2$, where $\state{\psi}$ is the output state, and $A$ is the set of the marked items represented by elementary basis of an $n$-qubit state. As the number of iterations of the Grover's algorithm approaches $\frac{\pi}{4} \sqrt{N}$, the amplitude sum should become close to 1.0. 

As we can see from Table~\ref{table:grover}, 
when $|A|=1$, the amplitude of the marked item becomes close to 1.0 when the number of qubits is less or equal to 14 and 3 ALS initial guesses are used.
When the number of qubits reaches 16, 3 ALS initial guesses are not enough for an accurate simulation,
and we need to try 10 different initial guesses in the CP-ALS algorithm to obtain an accurate rank-2 approximation to the output of Grover's circuit. When $|A|=20$, the amplitude sum obtained at the end of all simulations are above 0.5. However, for a 10-qubit simulation, the amplitude sum is only slightly above 0.5. For a 14-qubit simulation, the amplitude sum is 0.6. These low amplitude sums indicate the difficulty of using CP-ALS to find an accurate low-rank approximation to intermediate states in some iterations of the Grover's algorithm.  The success of CP-ALS depends on the random initial guesses used in the first iteration of the Grover's algorithm. For a 16 qubit simulation, we were able to obtain an amplitude sum that is close to 1.0.  This is likely due to a good initial guess generated for CP-ALS.

The difficulty encountered in CP-ALS rank reduction can be easily overcome by the use of the direct CPD reduction technique aimed at identifying CP components that are scalar multiples of each other.  This is the technique we discussed in Algorithm~\ref{alg:redmultiple}. 
Table~\ref{table:grover-direct} shows that 
by using this technique we can consistently simulate Grover's algorithm for single or multiple marked search items with high accuracy.  We have simulated circuits with as many as 30 qubits. In all experiments, the amplitude sums of the marked items obtained at the end of the simulation are close to 1.0.

\begin{table}[!htb]
\caption{Numerical experimental results for Grover's algorithm with CP-ALS. The rank limit ($r$) is set as 2 for all the experiments.}
\label{table:grover}
\centering
\begin{adjustbox}{width=0.85\textwidth}
\begin{tabular}{|c|c|c|c|c|c|c|c|c|c|c|c|c|}
  \hline
 Number of qubits
 & 8
 & 10
 & 12
 & 14
 & 16
 & 16
 & 8
 & 10
 & 12
 & 14
 & 16
 \\ 
  \hline
Num-marked-item & 1 & 1 & 1 & 1 & 1 & 1 & 20 & 20 & 20 & 20 & 20  \\ 
  \hline
Num-ALS-init & 3 & 3 & 3 & 3 & 3 & 10 & 3 & 3 & 3 & 3 & 3 \\ 
  \hline
Amplitude on $A$ & 1. & 0.999 & 1.0 & 1.0 & 0.0 & 1.0 & 0.972 & 0.537 & 0.981 & 0.607 & 0.998 \\ 
  \hline
\end{tabular}
\end{adjustbox}
\end{table}

\begin{table}[!htb]
\caption{Numerical experimental results for Grover's algorithm with direct CPD. The rank limit ($r$) is set as 2 for all the experiments.}
\label{table:grover-direct}
\centering
\begin{adjustbox}{width=0.74\textwidth}
\begin{tabular}{|c|c|c|c|c|c|c|c|c|c|c|c|}
  \hline
 Number of qubits
 & 10
 & 15
 & 20
 & 25
 & 30
 & 10
 & 15
 & 20
 & 25
 & 30
 \\ 
  \hline
Num-marked-item & 1 & 1 & 1 & 1 & 1 & 20 & 20 & 20 & 20 & 20  \\ 
  \hline
Amplitude on $A$ & 0.999 & 1.0 & 1.0 & 1.0 & 1.0 & 0.999 & 1.0 & 1.0 & 1.0 & 1.0 \\ 
  \hline
\end{tabular}
\end{adjustbox}
\end{table}

\subsection{Quantum Walks}
\label{subsec:exp-qwalk}

In Section~\ref{sec:qwalk}, we showed that intermediate states in a quantum walk can be low-rank when the starting point of the walk is a particular rank-1 state for some graphs (e.g. complete graphs with loops and complete bipartite graphs).
We will show that in these cases, the quantum walk can be efficiently and accurately simulated using a low rank representation.

Table~\ref{table:qwalk-loop-als} and Table~\ref{table:qwalk-loop-direct} show the experimental results for quantum walks on complete graphs with loops, using CP-ALS and direct CPD, respectively. As can be seen, we can accurately simulate the algorithm with direct CPD for all the number of qubits, with the rank limit being 2. For CP-ALS, we find that for large number of qubits, using small rank limit (2) usually cannot perform the rank-reduction accurately, as is discussed in Section~\ref{subsec:exp-grover}. To achieve high accuracy, we need to slightly increase the rank limit. Similar to the behavior of increasing the number of initial guesses for ALS, increasing the rank limit increases the probability of finding better approximations for CP-ALS. 
For example, when we increase the rank limit to 5, the algorithm can be accurately simulated for the 20-qubits' system, and when we further increase the rank limit to 20, we can accurately simulate the 24-qubits' system.

Table~\ref{table:qwalk-bipartite-als} and Table~\ref{table:qwalk-bipartite-direct} show the experimental results for quantum walks on complete bipartite graphs, using CP-ALS and direct CPD, respectively. 
As is discussed in Section~\ref{subsec:qwalk-bipartite}, all intermediate states can be accurately represented with rank 4. 
As can be seen, we can accurately simulate these quantum walks with direct CPD for graphs of various sizes. We set the rank limit to 4 in these simulations.
For CP-ALS, we need to slightly increase the rank limit to achieve high accuracy. For example, when we increase the rank limit to 40, the algorithm can be accurately simulated for a graphs with $2^{24}$ vertices. 

\begin{table}[!htb]
\caption{Numerical experimental results for quantum walks on complete graphs with loops with CP-ALS. 
The number of initial guesses in CP-ALS is set as 3 for all the experiments. 
}
\label{table:qwalk-loop-als}
\centering
\begin{adjustbox}{width=0.56\textwidth}
\begin{tabular}{|c|c|c|c|c|c|c|}
  \hline
 Number of qubits
 & 12
 & 16
 & 20
 & 20
 & 24
 & 24
 \\ 
  \hline
Rank limit $r$ & 2 & 2 & 2 & 5 & 5 & 20 \\ 
  \hline
Amplitude on $x^*$ & 0.964 & 0.983 & 0.002 & 1.0 & 0.0  &  0.999 \\ 
  \hline
\end{tabular}
\end{adjustbox}
\end{table}

\begin{table}[!htb]
\caption{Numerical experimental results for quantum walks on complete graphs with loops with direct CPD. 
The rank limit ($r$) is set as 2 for all the experiments.
All the experiments have 1.0 fidelity.
}
\label{table:qwalk-loop-direct}
\centering
\begin{adjustbox}{width=0.56\textwidth}
\begin{tabular}{|c|c|c|c|c|c|c|c|c|}
  \hline
 Number of qubits
 & 12
 & 16
 & 20
 & 24
 & 28
 & 32
 \\ 
  \hline
Amplitude on $x^*$  & 0.964 & 0.983 & 0.998 & 0.999 & 1.0 & 1.0 \\ 
  \hline
\end{tabular}
\end{adjustbox}
\end{table}

\begin{table}[!htb]
\caption{Numerical experimental results for quantum walks on complete bipartite graphs with CP-ALS. 
The number of initial guesses in CP-ALS is set as 3 for all the experiments. 
}
\label{table:qwalk-bipartite-als}
\centering
\begin{adjustbox}{width=0.64\textwidth}
\begin{tabular}{|c|c|c|c|c|c|c|c|c|c|c|c|c|c|}
  \hline
 Number of qubits
 & 8
 & 12
 & 16
 & 20
 & 20
 & 24
 & 24
 \\ 
  \hline
Rank limit $r$ & 4 & 4 & 4 & 4 & 10 & 10 & 40  \\ 
  \hline
Amplitude on $x^*$ & 0.781 & 0.897 & 0.942 & 0.0 & 0.988 & 0.0 & 0.998 \\ 
  \hline
\end{tabular}
\end{adjustbox}
\end{table}

\begin{table}[!htb]
\caption{Numerical experimental results for quantum walks on complete bipartite graphs with direct CPD. 
The rank limit ($r$) is set as 4 for all the experiments.
All the experiments have 1.0 fidelity.
}
\label{table:qwalk-bipartite-direct}
\centering
\begin{adjustbox}{width=0.69\textwidth}
\begin{tabular}{|c|c|c|c|c|c|c|c|c|c|c|c|c|c|}
  \hline
 Number of qubits
 & 8
 & 12
 & 16
 & 20
 & 24
 & 28
 & 32
 & 36
 \\ 
  \hline
Amplitude on $x^*$ & 0.781 & 0.897 & 0.942 & 0.988 & 0.998 & 1.0 & 1.0 & 1.0  \\ 
  \hline
\end{tabular}
\end{adjustbox}
\end{table} 

We also perform experiments to show that, for a general cyclic graph, a quantum walk base search algorithm is more difficult to simulate because the rank of the intermediate states increases rapidly and it is more difficult to reduce the rank of the intermediate states by using CP-ALS. We test on complete graphs (without loops) with one marked item $x^*$. The results are shown in Table~\ref{table:graph-complete}. For all  experiments, we use 3 different random CP-ALS initializations for each low-rank truncation routine.  We measure the final fidelity and the amplitude on the marked item, $x^*$, which is defined as  $    
\sum_{y\in V} \left| \stateconj{\psi}x^*, y\rangle \right|^2$,  where $\state{\psi}$ is the output state. High amplitude and high fidelity mean that the approximated algorithm is accurate. As can be seen in the table, as the number of qubits increases, the CP rank threshold necessary to reach high simulation accuracy also increases exponentially. As a result, it becomes more difficult to simulate the quantum walk on larger graphs.

Although the low-rank simulation of this algorithm via CP-ALS is not accurate,  CP decomposition yields better memory cost compared to the naive state vector representation. As can be seen in  Table~\ref{table:graph-complete}, when the number of qubits is $N$, a CP rank of $2^{N/2+1}$ yields accurate results, and the memory cost is only $(2N)\cdot 2^{N/2+1}$.

\begin{table}[!htb]
\caption{Numerical experimental results for quantum walk on complete graphs with CP-ALS.} 
\label{table:graph-complete}
\centering
\begin{adjustbox}{width=0.45\textwidth}
\begin{tabular}{|c|c|c|c|}
  \hline
 Number of qubits
 & 6
 & 10
 & 14
 \\ 
  \hline
Rank & $2^4$ & $2^6$ & $2^8$ \\ 
  \hline
Fidelity estimation \eqref{eq:fidel_est} & 1.0  & 0.840 & 0.998 \\ 
  \hline
Amplitude on $x^*$ & 0.645 & 0.807  & 0.938 \\ 
  \hline
\end{tabular}
\end{adjustbox}
\end{table} 



\section{Conclusions}

\label{sec:conclu}
In this paper, we  examined the possibility of using low-rank approximation  via  CP  decomposition to simulate quantum algorithms on classical computers. The quantum algorithms we have considered include the quantum Fourier transform, phase estimation, Grover's algorithm and quantum walks. 

For QFT, we have shown that all the intermediate states within the QFT quantum circuit and the output of the transform are rank-1 when the input is a standard (computational) basis. The same observation holds for the phase estimation algorithm, i.e., all the intermediate states in an phase estimation algorithm can be accurately approximated by a low-rank tensor.
When the input to the QFT circuit is a general rank-1 tensor, the CP rank of the intermediate states can grow rapidly. Applying rank reduction in the simulation of the QFT can lead to loss of fidelity in the output.

For Grover's algorithms, we have shown that the CP ranks of all the intermediate states are bounded by $a+1$, where $a$ is the size of the marked set. Therefore, Grover's algorithm can, in principle, always be simulated efficiently by using low-rank CP decomposition when the size of the marked set is small.

For quantum walks, we have shown that the algorithm can be simulated efficiently on some graphs such as complete graphs with loops and complete bipartite graphs when the transition probability along edges of the graph is constant. We point out that it may be difficult to simulate a quantum walk defined on a more general graph, e.g., a general cyclic graph with non-uniform transition probabilities.

We presented two methods for performing rank reduction for intermediate tensors produced in the simulation of the quantum circuit. The CPD-ALS is a more general approach. However, it may suffer from numerical issues when the initial amplitudes associated with some of the terms in CP decomposition is significantly smaller than those associated with other terms.  In this case, a method based on a direct elimination of scalar multiples is more effective.  

Our numerical experimental results demonstrate that, by using CP decomposition and low rank representation/approximation, we can indeed simulate some quantum algorithms with a many-qubit input on a classical computer with high accuracy. Other quantum algorithms such as quantum walks on a more general graph with non-uniform transition probabilities are more difficult to simulate, because the CP rank of the intermediate tensors grows rapidly as we move along the depth of the quantum circuit representation of the quantum algorithm.  This difficulty in fact  demonstrates the real advantage or superiority of a quantum computer over a classical computer for solving certain classes of problems.

\section*{Acknowledgements}
This work is supported by the U.S. Department of Energy (DOE) under Contract No. DE-AC02-05CH11231, through the Office of Advanced Scientific Computing Research Accelerated Research for Quantum Computing Program and the SciDAC Program.


\small
\bibliographystyle{plain}
\bibliography{main}

\normalsize
\label{sec:appendix}
\appendix

\section{Additional Analysis for Phase Estimation}\label{appendix:phase}
In Theorem~\ref{thm: phase} we will show that all the intermediate states on the first register of the phase estimation circuit can be approximated by a low-rank state whose CP rank is bounded by $\bigo(1/\epsilon)$, and the output state fidelity is at least $1-\epsilon$.
We look at the CP rank  of the states on the first register rather than the overall state, because the rank of the overall state is highly dependent on both $\state{\psi}$ and the oracle, hence is difficult to analyze. 

\begin{thm}
\label{thm: phase}
For the phase estimation circuit, if $\state{\psi}$ is the eigenvector of $U$, then all the intermediate states on the first register before the QFT$^{-1}$ operator can be represented by a rank-1 state. In addition, all the intermediate states in the QFT$^{-1}$ operator as well as its output state can be approximated by a low-rank state with the CP rank bounded by $\bigo(1/\epsilon)$, and the fidelity of all the intermediate and output states on the first register are at least $1-\epsilon$.
\end{thm}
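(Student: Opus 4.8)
The plan is to treat separately the part of the first register \emph{before} the QFT$^{-1}$ block, where the rank-1 claim is exact and comes from phase kickback, and the interior of the QFT$^{-1}$ block, where everything reduces to a tail estimate on the amplitudes of the output state \eqref{eq:phaseout} together with the reversed-circuit observation already used for Theorem~\ref{thm: dft_basis}.

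For the states before QFT$^{-1}$: since $\state{\psi}$ is an eigenvector of $U$, we have $U^{2^j}\state{\psi}=e^{i2\pi 2^j\theta}\state{\psi}$, so each controlled-$U^{2^j}$ gate acts only by phase kickback — it sends the controlling qubit $\alpha\state{0}+\beta\state{1}$ to $\alpha\state{0}+\beta e^{i2\pi 2^j\theta}\state{1}$ and leaves the second register equal to $\state{\psi}$, so the second register never becomes entangled with the first. Hence, after the layer of Hadamards and after each controlled-$U^{2^j}$, the first register is a Kronecker product of one-qubit states, i.e. exactly rank-1, ending in the state \eqref{eq:qftinv_input}.

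For the output state $\state{\phi}$ of QFT$^{-1}$, i.e. \eqref{eq:phaseout}, I would first sum the inner geometric series to get a closed form for the amplitude $c_x$ of $\state{x}$; since $x$ is an integer the numerator simplifies to $\pm\sin(\pi 2^n\theta)$, giving $|c_x|=2^{-n}\,|\sin(\pi 2^n\theta)|\,/\,|\sin(\pi(x/2^n-\theta))|$, and using $|\sin(\pi t)|\ge 2\,\mathrm{dist}(t,\Z)$ together with $|\sin(\pi 2^n\theta)|\le 1$ this yields $|c_x|\le 1/\big(2\,\mathrm{dist}(x,2^n\theta)\big)$ with the distance taken cyclically mod $2^n$. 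The cyclic distances $\mathrm{dist}(x,2^n\theta)$, as $x$ ranges over $\{0,\dots,2^n-1\}$, grow at least linearly in rank (at most two values of $x$ per unit distance), so if $S$ is the set of the $m$ indices with the largest $|c_x|$ then $\sum_{x\notin S}|c_x|^2$ is dominated by a convergent sum of the form $\sum_{k>m/2}k^{-2}$, which is $\bigo(1/m)$; taking $m=\bigo(1/\epsilon)$ makes it at most $\epsilon$. The normalized truncation $\state{\widetilde\phi}=\big(\sum_{x\in S}c_x\state{x}\big)/\|\cdot\|$ is then a linear combination of $m$ standard basis states, hence of CP rank at most $m=\bigo(1/\epsilon)$, and $|\braket{\phi|\widetilde\phi}|^2=\sum_{x\in S}|c_x|^2\ge 1-\epsilon$.

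Finally I would propagate this approximation through the QFT$^{-1}$ circuit. Writing QFT $=G_1G_2\cdots G_D$ as a product of its time-layers (with $G_D$ applied first) and QFT$^{-1}=G_D^{-1}\cdots G_1^{-1}$, the state after $j$ layers of QFT$^{-1}$ applied to \eqref{eq:qftinv_input} equals $G_{j+1}\cdots G_D\,\state{\phi}$, because \eqref{eq:qftinv_input} is the QFT of $\state{\phi}$ and the leading $G_i^{-1}$'s cancel in pairs; this is exactly the reversal used right after Theorem~\ref{thm: dft_basis}. Set $\state{\widetilde{I_j}}:=G_{j+1}\cdots G_D\,\state{\widetilde\phi}$. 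By the proof of Theorem~\ref{thm: dft_basis}, applying any initial segment of the QFT circuit to a standard basis state produces a rank-1 state, so by linearity $\state{\widetilde{I_j}}$ is a sum of $m$ rank-1 terms and hence has CP rank at most $m=\bigo(1/\epsilon)$; and since $G_{j+1}\cdots G_D$ is unitary, $|\braket{I_j|\widetilde{I_j}}|=|\braket{\phi|\widetilde\phi}|\ge\sqrt{1-\epsilon}$, which also covers the output ($j=D$). I expect the tail/decay estimate for the amplitudes of \eqref{eq:phaseout} to be the main obstacle: one must handle the cyclic distance carefully and verify that the discarded weight is genuinely $\bigo(1/m)$ uniformly in $\theta$; the rest follows from unitarity and Theorem~\ref{thm: dft_basis}.
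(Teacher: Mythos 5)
Your proposal is correct and follows essentially the same three-step route as the paper's proof: phase kickback gives exact rank-1 states before the QFT$^{-1}$ block, a tail bound on the amplitudes of \eqref{eq:phaseout} yields an $\bigo(1/\epsilon)$-term truncation with fidelity $1-\epsilon$, and unitarity of the remaining layers together with Theorem~\ref{thm: dft_basis} propagates both the rank bound and the fidelity to every intermediate state. The only difference is presentational: you derive the amplitude decay $|c_x|\lesssim 1/\mathrm{dist}(x,2^n\theta)$ and the $\bigo(1/m)$ tail explicitly from the geometric series, whereas the paper cites the corresponding bound $\sum_{k\le|t|}|\alpha^{(t)}|^2<1/(2k-1)$ from the phase-estimation literature.
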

\begin{proof}
Under the assumption that $\state{\psi}$ is the eigenvector of $U$, the output state of each controlled-$U^{2^j}$ gate will have the same rank as the input. For example, the output of controlled-$U^{2^{n-1}}$ will be 
\begin{equation}
(E_1 \state{h})  \otimes \cdots \otimes (I\state{\psi}) + 
(E_2 \state{h})  \otimes \cdots \otimes (U^{2^{n-1}}\state{\psi})
= 
\frac{1}{\sqrt{2}}(\state{0} + e^{i2\pi 2^{n-1} \theta} \state{1}) \otimes \state{h} \otimes \cdots \otimes \state{\psi},
\end{equation}
and the first register state is
\begin{equation}
\frac{1}{\sqrt{2}}(\state{0} + e^{i2\pi 2^{n-1} \theta} \state{1}) \otimes \state{h} \otimes \cdots \otimes \state{h},
\end{equation}
which also has rank 1. Other output states of controlled gates behave the same way, and all the intermediate states on the first register before QFT$^{-1}$ all have rank 1.

Next we analyze the output state of QFT$^{-1}$. We can rewrite the expression for the state expressed in \eqref{eq:phaseout} as
\begin{equation}
    \frac{1}{2^n}\sum_{x=0}^{2^n-1}\sum_{k=0}^{2^n-1}e^{-\frac{2\pi ik}{2^n}(x-2^n\theta)} \state{x}
    = 
    \frac{1}{2^n}\sum_{x=0}^{2^n-1}\sum_{k=0}^{2^n-1}e^{-\frac{2\pi ik}{2^n}(x-a)}e^{2\pi i \delta k} \state{x},
    \label{eq:phaseout-rewrite}
\end{equation}
where $a$ is the nearest integer to $2^{n}\theta$, $2^{n}\theta =a+2^{n}\delta$ and $0 \leq |2^{n}\delta |\leq \frac{1}{2}$.
Let $\alpha^{(t)}$ denote
the amplitude of $\state{a - t \mod 2^{n}}$ and $-2^{n-1} \leq t < 2^{n-1}$, as is shown in reference~\cite{cleve1998quantum}, $\alpha^{(t)}$ can be expressed as
\begin{equation}
    \alpha^{(t)} = \frac{1}{2^n} \frac{1 - e^{2\pi i{2^n}(\delta + \frac{t}{2^n})}}{1 - e^{2\pi i(\delta + \frac{t}{2^n})}},
\end{equation}
and the probability of outputting states that are at least $k$-away from $\state{a}$ is bounded by
\begin{equation}
    \sum_{k \leq |t| \leq 2^{n-1}} |\alpha^{(t)}|^2  < \frac{1}{2k-1}. 
\end{equation}
For $\frac{1}{2k-1}\leq \epsilon$, we need $k = \bigo(1/\epsilon)$. Therefore, 
with high fidelity ($1-\epsilon$), the output state of the first register can be approximated with a state that is in the space of $\{\state{x}, a - k \leq x\leq a + k\}$. The approximated state can be written as the linear combination of $2k=\bigo(1/\epsilon)$ standard basis states, and the CP rank is bounded by $\bigo(1/\epsilon)$.

Let $\state{\psi_a}$, $\state{\psi_t}$ denote the accurate output state, and the approximated low-rank output state of QFT$^{-1}$, respectively. Let $\state{\phi_a}$ denote one accurate intermediate state of QFT$^{-1}$. We can express $\state{\psi_a}=U\state{\phi_a}$ for some unitary $U$. The fidelity of $\state{\psi_t}$ can be expressed as
\begin{equation}
    |\langle \psi_t|\psi_a \rangle|^2 = 
    |\langle \psi_t|U|\phi_a \rangle|^2 =
    |\langle \phi_a|\phi_t \rangle|^2 \geq 1 - \epsilon,
\end{equation}
where $\state{\psi_t} = U\state{\phi_t}$. Above equation means that all the intermediate states $\state{\phi_a}$ can be approximated by $\state{\phi_t}$, and the fidelity will also be at least $1-\epsilon$. Based on Theorem~\ref{thm: dft_basis} and the discussion in Section~\ref{subsec:qft}, all the approximated intermediate states $\state{\phi_t}$ will have rank $\bigo(1/\epsilon)$. Therefore, all the intermediate states can be accurately approximated by low-rank states, and the theorem is proved. 
\end{proof}

\section{Additional Analysis for Grover's Algorithm}
\label{appendix:grover}

The rank of $\Ub{g}\state{\psi}$ is dependent on the implementation of $\Ub{g}$, which is usually regarded as a black box. For the simulation to be efficient, $\Ub{g}\state{\psi}$ needs to be in the CP representation, and the rank needs to be low. We will show in Theorem~\ref{thm:grover_step_rank} that the rank of the state $\Ub{g}\state{\psi}$ will be at most $2(a+1)$ times the rank of $\state{\psi}$, thus the rank is low consider that $a$ is small.

\begin{thm} 
\label{thm:grover_step_rank}
Consider an state $\state{\psi}$ in the CP representation, and the rank is $R$. Then there exists implementations for $\Ub{g}$ such that it consists of $a+1$ generalized controlled gates, and the output state $\Ub{g}\state{\psi}$ is also in the CP representation and the rank is at most $2(a+1)R$. 
\end{thm}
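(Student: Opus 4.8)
The plan is to bound the CP rank contributed by each of the two factors of $\Ub{g}=\Ub{o}\Ub{d}$ separately, using that both can be written as a short sum of Kronecker products of $2\times 2$ matrices. For the diffusion operator I would write $\Ub{d} = H^{\otimes n}\bigl(2E_0^{\otimes n} - I\bigr)H^{\otimes n}$, where $2E_0^{\otimes n}-I = -I^{\otimes n} + 2E_0^{\otimes n}$ is a single generalized (control-on-zero) gate. The two Hadamard layers are Kronecker products of one-qubit gates, so they act factor-wise on each rank-one term of a CP tensor and leave its CP rank unchanged. The middle operator is a sum of two Kronecker products of $2\times2$ matrices; applied to $\sum_{j=1}^{R}\ket{\psi_j}$ it gives $-\sum_j\ket{\psi_j} + 2\sum_j E_0^{\otimes n}\ket{\psi_j}$, and since $E_0^{\otimes n}\ket{\psi_j}=\langle 0^n|\psi_j\rangle\,\ket{0^n}$ is a single product state, the result has CP rank at most $2R$. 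Hence $\Ub{d}\ket{\psi}$ has rank at most $2R$.

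For the oracle the key algebraic observation is that the $a$ marked items $x^{*(1)},\ldots,x^{*(a)}$ are distinct standard-basis states, so the rank-one projectors $E_{x^{*(i)}}=E_{b^{(i)}_1}\otimes\cdots\otimes E_{b^{(i)}_n}$ are mutually orthogonal, and the product of the $a$ generalized (multi-controlled phase) gates that realize $\Ub{o}$ telescopes:
\[
\Ub{o} \;=\; \prod_{i=1}^{a}\bigl(I - 2E_{x^{*(i)}}\bigr) \;=\; I - 2\sum_{i=1}^{a} E_{x^{*(i)}},
\]
since every cross term vanishes. This is a sum of $a+1$ Kronecker products of $2\times2$ matrices. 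Applying it to a rank-$2R$ tensor $\sum_{j=1}^{2R}\ket{\phi_j}$ yields $\sum_{j}\ket{\phi_j} - 2\sum_{i=1}^{a}\sum_{j}\langle x^{*(i)}|\phi_j\rangle\,\ket{x^{*(i)}}$, and each $E_{x^{*(i)}}\ket{\phi_j}$ is a single product state, so the output has CP rank at most $2R(a+1)=2(a+1)R$. Combining the two steps, applying $\Ub{d}$ and then $\Ub{o}$ to a rank-$R$ state $\ket{\psi}$ produces $\Ub{g}\ket{\psi}$ with CP rank at most $2(a+1)R$; this implementation uses $a$ generalized controlled gates for $\Ub{o}$ and one for $\Ub{d}$ (the Hadamard and $X$ layers needed to realize $\Ub{d}$ and to fix control polarities are one-qubit gates that affect neither the rank bound nor the gate count), for a total of $a+1$.

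The only place where care is needed --- and the main obstacle to getting a good bound --- is that one must \emph{not} apply the $a$ multi-controlled-phase gates of $\Ub{o}$ to the tensor one after another, since that would multiply the rank by $2$ at each step and give only the useless bound $2^{a}R$. The content of the phrase ``there exists implementations'' is exactly that we are free to collapse $\prod_i(I-2E_{x^{*(i)}})$ into the single sparse operator $I-2\sum_i E_{x^{*(i)}}$ (valid precisely because the marked basis states are orthogonal) and apply it to the tensor in a single pass, so the oracle contributes a factor of only $a+1$. The remainder of the argument is the routine fact, already used in Section~\ref{sec:simulation}, that a Kronecker product of $2\times2$ matrices acts factor-wise on each rank-one CP term; I would also remark that the bound $2(a+1)R$ is not tight, since the $2Ra$ terms proportional to the fixed basis states $\ket{x^{*(i)}}$ can be merged into $a$ terms, but the stated bound suffices for the claimed efficiency.
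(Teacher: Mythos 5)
Your proof is correct and follows essentially the same route as the paper's: both arguments rest on collapsing the product of the $a$ multi-controlled phase gates into the single sum $I-2\sum_i E_{x^{*(i)}}$ via orthogonality of the marked basis projectors (giving a factor of $a+1$ in the rank), and on the diffusion operator contributing at most a factor of $2$. The only differences are immaterial: you apply $\Ub{d}$ before $\Ub{o}$ while the paper's appendix does the reverse, and you add the (true but unneeded) remark that the bound is not tight.
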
 
\begin{proof}
We assume that $\Ub{d}$ is implemented similar to what is shown in Figure~\ref{fig:grover}, and $\Ub{o}$ is implemented with $|A|=a$ generalized controlled gates. Each generalized controlled gate flips the sign for one state $\state{x}$ representing an marked item. Let $CU^{(x)}$ denote the controlled gate for $\state{x}=\state{x_1x_2\cdots x_n}$, we have 
\begin{equation}
    CU^{(x)} = I\otimes \cdots \otimes I - 2 E_{x_1}\otimes \cdots \otimes E_{x_n}, 
\end{equation}
and $\Ub{o}$ can be expressed as
\begin{equation}
    \Ub{g} = CU^{(y^{(1)})}CU^{(y^{(2)})}\cdots CU^{(y^{(a)})},
    \label{eq:Ug_chain}
\end{equation}
where the marked set $A = \{y^{(1)},\ldots,y^{(a)}\}$. \eqref{eq:Ug_chain} can be rewritten as
\begin{equation}
    \Ub{o} = I\otimes \cdots \otimes I - \sum_{i=1}^{a} 2 E_{y^{(i)}_1}\otimes \cdots \otimes E_{y^{(i)}_n},
    \label{eq:Ug_chain-rewrite}
\end{equation}
which contains the summation of $a+1$ Kronecker products. \eqref{eq:Ug_chain-rewrite} holds since for two different standard basis $u$ and $v$, 
\begin{equation}
    (E_{u_1}\otimes \cdots \otimes E_{u_n})(E_{v_1}\otimes \cdots \otimes E_{v_n}) = O.
\end{equation}
Therefore, applying $\Ub{o}$ can output a state in the CP representation with the rank being $(a+1)R$. Applying $\Ub{d}$ then will at most double the rank. Therefore, the output of $\Ub{g}$ is in the CP representation and the rank is at most $2(a+1)R$. 
\end{proof}

In Theorem~\ref{thm:grover_approximate}, we will show that for any marked set that is small in size, we can approximate the intermediate states by rank-2 states, and the simulation will still output one marked state will high probability with $\bigo(\sqrt{N})$ Grover's iterations.

\begin{thm} 
\label{thm:grover_approximate}
When $|A|=a$ is a constant and $a \ll N$, one marked state will be found with high probability if all the intermediate states are approximated by a low-rank state with CP rank upper bounded by 2, with the operator $\Ub{g}$ applied $\bigo(\sqrt{N})$ times. 
\end{thm}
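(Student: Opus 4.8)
The plan is to build an explicit rank-$2$ truncation scheme under which the multi-marked Grover iteration reduces to an (approximate) single-marked one, and then to control how the per-step truncation errors pile up over the $\Theta(\sqrt N)$ iterations. Pick any marked item $x_1\in A$ and put $S'=\mathrm{span}\{\state{h},\state{x_1}\}$. Because $\state{h}=H^{\otimes n}\state{0^n}$ and the computational basis vector $\state{x_1}$ are each rank-$1$ tensors, every state in $S'$ has CP rank at most $2$; the truncation I apply after each $\Ub{g}$ is the orthogonal projection onto $S'$, followed by renormalization. (In the implemented simulation, CP-ALS or a best rank-$2$ fit is used instead, and its fidelity can only match or exceed that of this concrete scheme, so it suffices to analyze the latter.) By Theorem~\ref{thm:grover_step_rank} the tensor presented to the rank-reduction routine has rank at most $4(a+1)$, a constant, so the whole simulation costs $\bigo(\sqrt N)$ gate applications plus $\bigo(\sqrt N)$ rank reductions of constant size. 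A short computation shows the part of $\Ub{g}\state{\psi}$ discarded by this projection has norm $\bigo(\sqrt{a/N})$ for any unit $\state{\psi}\in S'$, so the local fidelity of each truncation (in the sense of Section~\ref{subsec:fidel}) satisfies $f_i\ge 1-\bigo(a/N)$.

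The key structural fact is that the single-marked Grover operator $\Ub{g}'=(I-2\state{x_1}\stateconj{x_1})\Ub{d}$ maps $S'$ into itself exactly and, within $S'$, acts as a rotation toward $\state{x_1}$ by the angle $2\arcsin(1/\sqrt N)=2/\sqrt N+\bigo(N^{-3/2})$, exactly as in standard single-item Grover search~\cite{brassard2002quantum}. The truncated multi-marked iteration on $S'$ is a perturbation of this: using $\Ub{d}=2\state{h}\stateconj{h}-I$ from \eqref{eq:Ud} and $\Ub{o}=I-2\sum_{x\in A}\state{x}\stateconj{x}$, one checks that $\Ub{d}$ preserves $S'$ exactly, that $\Ub{o}$ fixes $\state{x_1}$, and that $P_{S'}\Ub{o}\state{h}$ equals $\Ub{o}'\state{h}=\state{h}-\tfrac{2}{\sqrt N}\state{x_1}$ up to a residual of norm $\bigo(a/N)$, arising from the part of the discarded leakage $-\tfrac{2}{\sqrt N}\sum_{x\in A\setminus\{x_1\}}\state{x}$ that projects back into $S'$. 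Since projection-then-renormalization only rescales, the truncated $S'$-trajectory follows the single-marked Grover trajectory aimed at $x_1$ up to an $\bigo(a/N)$ perturbation per step.

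Finally I accumulate. Summing the per-step perturbations over $K=\lfloor\tfrac{\pi}{4}\sqrt N\rfloor$ iterations yields a total deviation $\bigo(Ka/N)=\bigo(a/\sqrt N)$ from the ideal single-marked trajectory, which after these $K$ steps has squared amplitude $1-\bigo(1/N)$ on $\state{x_1}$; hence the truncated output state $\state{\psi}$ satisfies $|\braket{\psi|x_1}|^2\ge 1-\bigo(a/\sqrt N)$, so a measurement returns the marked item $x_1$ with probability tending to $1$ as $N\to\infty$ with $a$ fixed. Independently, multiplying the local fidelities as in \eqref{eq:fidel_est} gives global fidelity $\prod_{i=1}^K f_i\ge 1-\bigo(a/\sqrt N)$, confirming that this rank-$2$ scheme stays a high-fidelity approximation throughout. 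I expect the main obstacle to be exactly this accumulation step: one must show that the $\bigo(a/N)$-per-step perturbation of the dynamics (together with the $\bigo(1/\sqrt N)$ non-orthonormality $\braket{h|x_1}=1/\sqrt N$) accumulates only linearly rather than amplifying over the $\Theta(\sqrt N)$ iterations, and in particular does not resonate with the slow $\Theta(1/\sqrt N)$-per-step rotation; a discrete Gr\"onwall or telescoping estimate, or the fidelity-product bound \eqref{eq:fidel_est} directly, handles this, but the book-keeping in the non-orthonormal basis $\{\state{h},\state{x_1}\}$ requires care.
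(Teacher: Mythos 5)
Your proposal is correct and follows essentially the same route as the paper's proof: fix one marked item, restrict to $S'=\mathrm{span}\{\state{h},\state{x^{*}}\}$, observe that the truncated iteration is the exact single-marked Grover rotation plus an $\bigo(a/N)$ perturbation per step, and conclude for $a\ll N$. The one place you go beyond the paper is the final accumulation step — the paper simply asserts the perturbation term ``can be negligible,'' whereas your telescoping bound showing the per-step $\bigo(a/N)$ error grows only to $\bigo(a/\sqrt{N})$ over $\Theta(\sqrt{N})$ iterations makes that assertion rigorous.
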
 
\begin{proof}
Let $\state{a}$ denote one of the states in the marked set. We consider the case when all the intermediate approximate states are in the space $S'$ spanned by $\state{a}$ and $\state{h}$, whose CP ranks are at most 2. Note that the input state $\state{h}$ is also in the span. We will show that after applying operator $\Ub{g}$ for $\lfloor\frac{\pi}{4} \sqrt{N} \rfloor$ times, the output is close to $\state{a}$.

Consider an input state $\state{x} = \alpha \state{a} + \beta\state{h}$, where $a$ is one of the target states in $A$. Then
\begin{equation}
\Ub{g}\state{a} = (2\state{h}\stateconj{h} - I)(\Ub{o})\state{a}
= \state{a} - 2\state{h}\stateconj{h}a\rangle
= \state{a} - \frac{2}{\sqrt{N}}\state{h},
\end{equation}
\begin{equation}
\Ub{g}\state{h} = \Ub{g}
\bigg(
\sqrt{\frac{a}{N}}\state{A} + \sqrt{\frac{b}{N}}\state{B}
\bigg) 
= (1- \frac{4a}{N})\state{h} + 2\sqrt{\frac{a}{N}}\state{A},
\end{equation}
where $\Ub{g}\state{a}$ is in $S'$ while $\Ub{g}\state{h}$ is not. The low-rank approximation is performed on $\Ub{g}\state{h}$,
resulting in the (unnormalized) state $(1- \frac{4a}{N})\state{h} + 2\sqrt{\frac{1}{N}}\state{a}$. The step of applying $\Ub{g}\state{x}$ and then perform the low-rank approximation can be written as a linear transformation $G'$, where
\begin{equation}
G' =
  \begin{bmatrix}
   1 &
   2\sqrt{\frac{1}{N}} \\
   -2\sqrt{\frac{1}{N}} &
   1 - \frac{4a}{N}
   \end{bmatrix}.
\end{equation}
The first row is applied on the $\state{a}$ component and the second row on $\state{h}$. 
Let $\state{b} = \frac{\sqrt{N}\state{h} - \state{a}}{\sqrt{N-1}}$ denoting the superposition of all the states except $\state{a}$. we can rewrite $G'$ based on the basis $\state{a}, \state{b}$ as follows:
\begin{equation}
\label{eq:approx_G}
G' =
  \begin{bmatrix}
   1 &
   \frac{1}{N} \\
   0 &
   \frac{N-1}{N} 
   \end{bmatrix}
  \begin{bmatrix}
   1 &
   2\sqrt{\frac{1}{N}} \\
   -2\sqrt{\frac{1}{N}} &
   1 - \frac{4a}{N}
   \end{bmatrix}
  \begin{bmatrix}
   1 &
   -\frac{1}{N-1} \\
   0 &
   \frac{N}{N-1} 
   \end{bmatrix},
   \end{equation}
and the first row is applied on the $\state{a}$ component and the second row on $\state{b}$. Above operator can be rewritten as 
\begin{equation}
\begin{bmatrix}
   1-\frac{2}{N} &
   2\frac{\sqrt{N-1}}{N} \\
   -2\frac{\sqrt{N-1}}{N} &
   1-\frac{2}{N} 
   \end{bmatrix}
   +
\begin{bmatrix}
   0 &
   \frac{4-4a}{N}\sqrt{\frac{1}{N-1}} \\
   0 &
   \frac{4-4a}{N}
   \end{bmatrix}.
\end{equation}
Note that the first component is equal to the operator $\Ub{g}$ for the Grover's algorithm when the marked set only has $\state{a}$, and the second component comes from both $|A|>1$ and the low-rank approximation. When $a\ll N$, it can be observed that the second component can be negligible, and the approximated algorithm will output $\state{a}$ with high probability.

\end{proof}

Because the approximated operator $G'$ in \eqref{eq:approx_G} is close to the $\Ub{g}$ for the system with the marked set size equal to 1, the number of iterations necessary to get a marked state with high probability increases from $\lfloor \frac{\pi}{4} \sqrt{\frac{N}{a}} \rfloor$ to $\lfloor \frac{\pi}{4} \sqrt{N} \rfloor$. For the system when $a$ is unknown, $\lfloor \frac{\pi}{4} \sqrt{N} \rfloor$ iterations need to be performed for both.

\end{document}